\newtheorem{theorem}{Theorem}
\newtheorem{definition}{Definition}
\newtheorem{remark}{Remark}
\begin{document}

\title{Multivariate Extreme Value Theory Based Channel Modeling for Ultra-Reliable Communications}

\author{Niloofar~Mehrnia,~\IEEEmembership{Student Member,~IEEE,}
        Sinem~Coleri,~\IEEEmembership{Fellow,~IEEE}
\thanks{Copyright (c) $2015$ IEEE. Personal use of this material is permitted. However, permission to use this material for any other purposes must be obtained from the IEEE by sending a request to pubs-permissions@ieee.org.}
\thanks{Niloofar Mehrnia and Sinem Coleri are with the Department of Electrical and Electronics Engineering,
Koc University, Istanbul, Turkey (e-mail: nmehrnia17@ku.edu.tr; scoleri@ku.edu.tr).}
\thanks{Niloofar Mehrnia is also with Koc University Ford Otosan Automotive Technologies Laboratory (KUFOTAL), Sariyer, Istanbul, Turkey, 34450.}
\thanks{Sinem Coleri acknowledges the support of Ford Otosan, and the Scientific and Technological Research Council of Turkey 2247-A National Leaders Research Grant $\#121C314$.}
}

\maketitle

\begin{abstract}
Attaining ultra-reliable communication (URC) in fifth-generation (5G) and beyond networks requires deriving statistics of channel in ultra-reliable region by modeling the extreme events. Extreme value theory (EVT) has been previously adopted in channel modeling to characterize the lower tail of received powers in URC systems. In this paper, we propose a multivariate EVT (MEVT)-based channel modeling methodology for tail of the joint distribution of multi-channel by characterizing the multivariate extremes of multiple-input multiple-output (MIMO) system. The proposed approach derives lower tail statistics of received power of each channel by using the generalized Pareto distribution (GPD). Then, tail of the joint distribution is modeled as a function of estimated GPD parameters based on two approaches: logistic distribution, which utilizes logistic distribution to determine dependency factors among the Fréchet transformed tail sequence and obtain a bi-variate extreme value model, and Poisson point process, which estimates probability measure function of the Pickands angular component to model bi-variate extreme values. Finally, validity of the proposed models is assessed by incorporating the mean constraint on probability measure function of Pichanks coordinates. Based on the data collected within the engine compartment of Fiat Linea, we demonstrate the superiority of proposed methodology compared to the conventional extrapolation-based methods in providing the best fit to the multivariate extremes.
\vspace{0.4cm}

\end{abstract}

\begin{IEEEkeywords}
Multivariate extreme value theory, wireless channel modeling, MIMO, spatial diversity, ultra-reliable communication, $6$G.
\end{IEEEkeywords}

\section{Introduction}
\label{sec:intro}
Ultra-reliability is one of the most important features of the fifth generation ($5$G) and beyond networks with the goal of providing a reliable and resilient communication infrastructure that can support a wide range of applications and use cases, including industrial automation, self-driving cars, and remote medical procedures \cite{interface_01}-\nocite{5G_01}\nocite{interface_02}\cite{urllc_02}. In the current terminology of $5$G standards, ultra-reliability is commonly associated with low latency and creates an ultra-reliable low latency communication (URLLC). However, this tight coupling should be relaxed in certain scenarios such as health monitoring or disaster recovery applications where ultra-reliability is essential, but the allowed latency can be larger than the typical $1$~milliseconds (ms) \cite{urllc_05}.
Fulfillment of the reliability requirements of $10^{-9}$-$10^{-5}$ packet error rate (PER) in ultra-reliable communication (URC) has been widely used in previous studies \cite{urllc_02}-\nocite{urllc_05}\nocite{reliability_01}\nocite{urc_03}\nocite{urc_04}\nocite{urc_05}\cite{urc_06}. URC necessitates fundamental breakthrough in the derivation of the channel lower tail distribution by using novel statistical tools, such as extreme value theory (EVT) \cite{MehrniaTWC}, \cite{MehrniaTVT}, power law expression based on the extrapolation of the channel data \cite{urllc_02}, \cite{urllc_05}, and data-driven learning framework \cite{reliability_01}.
Different forms of diversity techniques can be applied in URC to reduce the required signal-to-noise-ratio (SNR) for achieving a certain reliability. Among the diversity techniques, spatial diversity is the most viable solution to achieve high reliability over fading channels by utilizing multiple transmission antennas \cite{urllc_diversity_01}. Consequently, towards achieving ultra-reliability in spatial diversity, the key is to obtain a proper multi-channel modeling approach that characterizes the statistics of multivariate extreme events for the systems with multiple input multiple outputs (MIMO) connections \cite{evt_04}.

Ultra-reliability can be achieved through different forms of diversity techniques, including time diversity \cite{timediv_01}, frequency diversity \cite{urllc_diversity_02}, interface diversity \cite{interferencediv}, or spatial diversity \cite{urllc_02}, \cite{urllc_08}, \cite{urllc_diversity_03}-\nocite{urllc_diversity_05}\cite{urllc_diversity_04}.
In \cite{timediv_01}, a combination of the stochastic geometry and the queueing analysis is proposed to derive the URLLC achievable ratio through the concept of effective bandwidth, which transforms the constraints of delay and reliability into the constraint of the achievable transmission rate. The authors in \cite{urllc_diversity_02} explore the diversity aspects of random access schemes where users transmit over multiple orthogonal Rayleigh fading subchannels and are treated as diversity branches by a maximum ratio-combining (MRC) receiver.  
The authors in \cite{interferencediv} consider a trade-off between transmission latency and reliability by allocating coded fragments of the encoded payload message to different interfaces according to their bit rate, latency, and reliability properties. \cite{urllc_diversity_03} proposes a statistical model by incorporating the time of arrival (TOA), angle of arrival (AOA), and angle of departure (AOD) of each multi-path component to estimate the joint magnitude and phase PDFs of measured data for realistic model parameters. 
A survey presented in \cite{urllc_diversity_05} investigates the main properties of massive MIMO channels based on different configurations of an antenna array, which directly affect the channel models and system performance. On the other hand, \cite{urllc_diversity_04} provides a survey of the most critical concepts in propagation channel modeling for MIMO systems by classifying the channels into physical models focusing on double-directional propagation and analytical models considering the channel impulse response and antenna properties.
Nevertheless, an essential building block of an ultra-reliable wireless system is a model of the wireless channel at the physical layer that captures the statistics of rare events, and extensive fading dips \cite{urllc_02}, \cite{urllc_08}.

Current studies on the statistical modeling of the wireless channel operating in the ultra-reliable communication region are categorized into four folds: $a$) Extrapolation of a wide range of commonly used average statistics-based wireless channel models such as Rayleigh and Rician towards the ultra-reliability region \cite{urllc_02}, \cite{urllc_05}; $b$) Recommendation of the usage of new channel parameters by incorporating extreme reliability requirements into the communication \cite{urllc_07}, \cite{urllc_08}; $c$) Usage of a non-parametric statistical learning algorithm to estimate the probability density function (PDF) of the channel distribution \cite{confidenceinterval_03}; and $d$) Application of the EVT to characterize the statistics of the channel tail distribution by deriving the statistics of extreme events happening rarely \cite{MehrniaTWC}, \cite{MehrniaTVT}. In \cite{urllc_02}, a simple power-law expression is proposed to extrapolate the cumulative distribution function (CDF) of the commonly used average statistics channels to the ultra-reliable regime of operation. In addition, the authors in \cite{urllc_02} apply the power-law results to analyze the performance of receiver diversity schemes and obtain a new simplified expression for MRC applicable for an ultra-reliable communication. In \cite{urllc_05}, two performance measures, average reliability (AR) for dynamic environments and probably correct reliability (PCR) for the static environments, have been proposed based on the extrapolation-based channels in \cite{urllc_02} with the goal of selecting a transmission rate guaranteeing ultra-reliability. However, these extrapolated distributions cannot accurately estimate the lower tail distribution, and therefore, result in several orders of magnitude differences in the estimated PER \cite{MehrniaTWC}, \cite{MehrniaTVT}. Also, the channels in \cite{urllc_02} are assumed independent and non-identically distributed (i.n.i.d.), which is not a realistic assumption.
In \cite{urllc_07}, \cite{urllc_08}, the authors have modified the definitions of coherence time and distance to the time and distance over which a channel is predictable with a given reliability, respectively. Accordingly, the authors in \cite{urllc_08} use a standard spatially independent quasi-static Rayleigh fading model of wireless channels to examine channel dynamics in the URC context. Nevertheless, none of these frameworks derives ultra-reliability statistics by proposing a channel modeling methodology. In \cite{confidenceinterval_03}, the authors propose a non-parametric statistical learning algorithm that utilizes kernel density estimation (KDE) to estimate the PDF of channel distribution and accordingly, selects a proper transmission channel rate that meets the requirements of ultra-reliability. However, the proposed data-driven non-parametric approach requires a massive number of training samples, about $10 \times \epsilon^{-1}$, to meet the targeted error probability $\epsilon$. 
Only recently in \cite{MehrniaTWC}, \cite{MehrniaTVT}, we have proposed a novel channel modeling methodology based on EVT to derive the statistics of the lower tail distribution while efficiently dealing with a massive amount of data. EVT is a unique and robust framework to develop techniques for modeling the statistics of rare events based on the implementation of mathematical limits as finite-level approximation \cite{urllc_02}, \cite{evt_04}, \cite{evt_03}.

EVT has been used at the data link and network layers to model the tail statistics of queue length and delay \cite{upperlayer_03}-\nocite{upperlayer_05}\cite{upperlayer_11} and derive closed-form asymptotic expressions for the throughput, bit error rate (BER), and PER over different fading channels \cite{upperlayer_04}-\nocite{upperlayer_02}\nocite{upperlayer_01}\cite{upperlayer_12}. Moreover, block maxima EVT models have been applied to model the maximum end-to-end (E2E) latency by using the generalized extreme value (GEV) distribution for virtual reality (VR) applications in Terahertz \cite{urllc_10}. 
Furthermore, EVT has been used to derive the distribution of the maximum end-to-end SNR in an opportunistic relay selection-based cooperative relaying network consisting of a large number of independent and non-identical relay links \cite{upperlayer_10}.
Additionally, EVT and federated learning (FL) approaches have been combined in \cite{upperlayer_13} to propose a Lyapunov-based distributed transmit power and
resource allocation procedure for vehicular users (VUEs), while the statistics of the queue lengths exceeding a high threshold are characterized by using the GPD.
However, these upper layer studies assume the average statistic-based channel models such as Rayleigh, Rician, or Nakagami fading; therefore, their usage may not be suitable in a system operating at URC \cite{urllc_02}, \cite{MehrniaTWC}, \cite{MehrniaTVT}. To fill this gap, in \cite{MehrniaTWC}, we have used EVT techniques at the physical layer to determine the optimum threshold below which the received power samples are considered extreme events and included in the tail distribution, then model the channel tail distribution by using the generalized Pareto distribution (GPD), and finally, assess the validity of the proposed EVT-based model by using the probability plots. Additionally, in \cite{MehrniaTVT}, we have extended \cite{MehrniaTWC} to model the tail of the non-stationary channel distribution by first determining an external factor causing non-stationarity and, accordingly, dividing the channel data into multiple stationary sequences to apply the EVT techniques initially presented in \cite{MehrniaTWC} and model the parameter of the fitting distribution as a change-point function of time. 
Additionally, recently, in \cite{MehrniaTVTRate}, \cite{MehrniaEucnc}, we have proposed a novel EVT-based framework dealing with a relatively low number of data samples to estimate the optimal transmission rate and validate it by assessing the outage probability so that reliability constraints are met with given confidence for ultra-reliable communications. Nevertheless, we have not considered spatial diversity to address the reliability constraints in more than one dimension and assess the dependency between extreme events. Therefore, there is a lack of studies focusing on channel modeling strategies in the ultra-reliable regime of operation by discovering the inter-relation of extreme events. Multivariate extreme value theory (MEVT) is a robust statistical discipline that develops techniques to model the relation of rare events based on the multidimensional limiting relations \cite{urllc_02}, \cite{evt_04}, \cite{evt_03}.

The goal of this paper is to propose a novel channel modeling methodology based on MEVT for a system using spatial diversity in MIMO-URC to derive the lower tail statistics of the received signal power in multiple dimensions while efficiently dealing with a massive amount of corresponding data.
We focus on the two-dimensional or bi-variate case to highlight the main concepts and issues of MEVT without increasing the complexity of the notation required for an entire multivariate perspective.
The modeling approach adapts MEVT to 
(i) fit GPD to the tail distribution of the received power samples exceeding a given threshold in each sequence and derive the scale and shape parameters of GPD while the thresholds are determined optimally; 
(ii) apply Fr\'echet transformation to each data sequence so that the marginal distributions of the resulting variables are Fr\'echet;
(iii) determine the dependency factor between the Fr\'echet transformed sequences;
(iv) apply the logistic distribution approach and Poisson point process approach to fit bi-variate GPD (BGPD) to the tail of the joint distribution of the Fr\'echet transformed sequences, considering the dependency factor; 
and (v) assess the validity of the fitted BGPD model by checking the mean constraint on the probability measure function. MEVT extends the ideas of univariate EVT to the analysis of dependent extreme events for multiple random variables by introducing several features of MEVT that are not available in EVT analysis, including modeling dependence structures and modeling the joint distribution of multiple extreme events. The original contributions of the paper are listed as follows:

\begin{itemize}
    \item We propose a comprehensive channel modeling methodology for a system operating at URC based on MEVT for the first time in the literature. The methodology consists of techniques for deriving the lower tail statistics of each channel data sequence by using Uni-variate GPD (UGPD), fitting BGPD to the tail of the joint probability distribution by using the logistic distribution-based and Poisson point process-based approaches, and assessing the validity of these two proposed models by incorporating the probability measure function of the Pichanks coordinates.
    \item We propose novel techniques based on the logistic distribution to fit BGPD to the tail of a joint probability distribution of channel data, for the first time in the literature. We first determine the dependency factor among the Fr\'echet transformed sequences, and then derive a closed-form expression for the BGPD model.
    These techniques are original and contribute to the advancement of statistical models for analyzing joint probability distributions of extreme channel samples.
    \item We propose novel techniques based on Poisson point process approach to represent the Fr\'echet transformed channel tail data using their pseudo-polar Pickands pairs, radial and angular components, for the first time in the literature. We use these representations to determine the probability measure function of the Pickands angular component and the corresponding BGPD model.
    \item We introduce a novel approach for assessing the validity of the derived BGPD models for the joint distribution of the channel tail based on the verification of the mean constraint on the corresponding probability measure function of the Pickands coordinates obtained from BGPD approaches, for the first time in the literature.
    \item Based on the data collected within the engine compartment of Fiat Linea using one transmitter and two receivers under various engine vibrations and driving scenarios, we demonstrate the superiority of the proposed methodology for deriving the tail statistics of multivariate extremes compared to the conventional extrapolation-based models of the average statistics channels to the ultra-reliable region, in terms of the modeling accuracy.
\end{itemize}

The rest of the paper is organized as follows. Section~\ref{sec:system_model} describes the system model and assumptions considered throughout the paper. Section~\ref{sec:background} describes the basics of uni-variate and multivariate EVT together with the theorems used in the development of the proposed multivariate channel modeling approach. Section~\ref{sec:methodology} presents the proposed channel modeling framework based on MEVT for characterizing the joint multi-channel tail distribution in the ultra-reliable region. Section~\ref{sec:numerical_results} provides the channel measurement setup and the performance evaluation of the proposed algorithm in determining the optimum threshold, fitting BGPD to the joint probability distribution of the extremes, and comparing the proposed methodology to the conventional methods in terms of the estimation accuracy. Finally, concluding remarks and future works are given in Section~\ref{sec:conclusions}.

\section{System Model}
\label{sec:system_model}
We consider MIMO for a single transmitter (Tx)-receiver (Rx) pair communicating over a stationary channel, i.e., the parameters of the channel distribution are fixed over a vast period. If the channel is non-stationary according to the Augmented Dickey-Fuller (ADF) test results \cite{MehrniaTVT}, the external factors causing time variation are determined such that the sequence is divided into $M$ groups, each of which can be considered stationary, as explained in detail in \cite{MehrniaTVT}. Then, the tail distribution of each stationary sequence is modeled by using the GPD. The GPD is used in EVT to estimate the tail of distribution by modeling the probabilistic distribution of the values exceeding a given threshold \cite{evt_04}, \cite{evt_03}, \cite{evt_01}\nocite{evt_02}\nocite{evt_05}\nocite{evt_06}\nocite{evt_07}-\cite{evt_08}. Please note that the same procedure can be applied to multiple transmitter and receiver pairs.
The transmit power is assumed to be fixed and known in advance. Therefore, estimating the received signal power is equivalent to estimating the squared amplitude of the channel state information \cite{urllc_05}, \cite{MehrniaTWC}.  

Ultra-reliability is defined as communication with target packet error probability in the range of $10^{-9}$-$10^{-5}$. We assume that the outage is the only source of packet error and is defined as the received power being less than a predefined threshold \cite{urllc_02}, \cite{urllc_11}, \cite{urllc_12}. Since the main focus is on modeling the channel behavior in the ultra-reliable region by employing the distributions, modeling and validation techniques adopted from extreme value theory in an offline manner, the delay required for collecting a large number of empirical samples is not considered in the first step of the study. The design of a real-time URLLC system based on these statistics requires the consideration of a limited amount of data requiring the inclusion of confidence intervals in the parameter estimation or usage of transfer learning techniques, but it is out of the scope of this paper and subject to future work.

In order to model the multivariate extremes of received powers, the sequences of measured received power samples at Tx-Rx pairs are converted into a sequence of independent and identical distributed (i.i.d.) samples by removing their dependency via declustering approach \cite{MehrniaTWC}. Upon applying EVT to the resulting sequence of i.i.d. samples in each pair, the optimum threshold is determined for each sample sequence. Determining an optimum threshold is of paramount importance as it determines the number of samples that are included in the channel tail and considered as an extreme value. Then, the parameters of the Pareto distribution associated with the optimum threshold are estimated by using the maximum likelihood estimator (MLE).   
A multivariate version of GPD, i.e., a family with which to approximate a joint distribution on regions where we observe joint extreme values, is obtained by using the MEVT.

\section{Background}
\label{sec:background}
EVT is a robust framework that models the probabilistic distribution of extreme events occurring rarely. EVT has been applied in different fields, including hydrology to quantify risks of extreme floods, rainfalls, and waves \cite{evt_03}, and finance to estimate losses due to extreme events \cite{evt_01}. However, it has been recently used in the telecommunication field to analyze the behavior of extreme values either in network traffic, worst-case delay, queue lengths, throughput, and BER/PER of URLLC \cite{MehrniaTVTRate}, \cite{upperlayer_03}, \cite{urllc_01}, or in channel modeling to statistically derive the lower tail of the received power for URC systems \cite{MehrniaTWC}-\cite{MehrniaTVT}. 
In the following, Section~\ref{sec:evt} presents the uni-variate technique for modeling the extremes of a single process. Section~\ref{sec:mevt} gives the general modeling technique of multivariate extremes to describe the extreme value inter-relationships of two or more processes. Sections~\ref{sec:backgroundlogistic} and \ref{sec:backgroundpointprocess} present bi-variate extreme value techniques based on the logistic family distribution and Poisson point process approach, respectively. Finally, Section~\ref{sec:background_assessment} provides techniques to assess the validity of the proposed bi-variate extreme value modeling techniques.

\subsection{Uni-variate Extreme Value Theory}
\label{sec:evt}
Uni-variate EVT (UEVT), in general, focuses on the representation and modeling techniques for the extremes of a single process. UEVT is used for modeling extreme events in two main ways. The first concerns models for block maxima by using the (GEV) distribution given by
\begin{equation}
\label{eqn:evt_dist}
    F(z) = \exp{\big\{-\big[1+\xi\big(\frac{z-\mu}{\sigma}\big)\big]^{-1/\xi}\big\}}
\end{equation}
where $\mu$, $\xi$, and $\sigma$ are the location, shape, and scale parameters of the GEV distribution, respectively \cite[Theorem~3.3]{evt_04}.
The second uses EVT to characterize the tail of a distribution, i.e., the extremes of a sequence, by modeling the probabilistic distribution of values exceeding a given threshold through the generalized Pareto distribution (GPD).

Assume that $\{x_1,...,x_N\}$ is an i.i.d. stationary sequence, where $x_{i}$ denotes the $i^{th}$ received power for $i \in \{1,...,N\}$, and $N$ is the total number of received power samples. Then, according to the EVT, the tail distribution of the power sequence, i.e., the probabilistic distribution of the power values exceeding a given threshold $u$, can be expressed as
\begin{equation}
\label{eqn:gpd_dist}
    G(l) = 1-\big[1+\frac{\xi l}{\tilde{\sigma}}\big]^{-1/\xi},
\end{equation}
where $l$ is a non-negative value denoting the exceedance below threshold $u$, i.e., ($l=u-X$), $X$ denotes any $x_i$ below threshold $u$; $G(l)$ expresses the GPD; and $\xi$ and $\tilde{\sigma}=\sigma+\xi(u-\mu)$ are shape and scale parameters of the GPD, respectively. It should be noted that $\mu$ and $\sigma$ are the location and scale parameters of the GEV distribution fitted to the CDF of $m_N = min \{ x_1,...,x_N\}$, respectively \cite[Theorem~1]{MehrniaTWC}, \cite{evt_11}.

\subsection{Bi-variate Extreme Value Theory}
\label{sec:mevt}
When studying the extremes of two or more processes, the individual process can be modeled by using uni-variate techniques. However, the possible dependency between the extreme events requires the investigation of their joint behavior. 
Bi-variate EVT (BEVT) allows us to estimate the probability of exceeding thresholds simultaneously and analyze the inter-dependence of two variables in the extreme value region \cite{mevt_11}. 

In the following, we provide the Theorems and Corollaries required to develop our channel modeling methodology. Theorem~\ref{theorem:bevt} incorporates BEVT applications to investigate the general form of BGPD models that are valid to estimate the bi-variate tail distribution.
Then, based on Theorem~\ref{theorem:bevt}, we obtain the logistic family distribution-based BGPD and Poisson point process-based BGPD models in Theorems~\ref{theorem:logistic} and \ref{theorem:pointprocess}, respectively. Also, we define the Fr\'echet transformation in Definition~\ref{def:frechet} as the input of BEVT in Theorem~\ref{theorem:bevt} is required to be Fr\'echet distributed. Additionally, the definition of Pseudo-polar Pickands coordinate transformation is provided in Definition~\ref{def:pickand} to be used in the Poisson point process approach in Theorem~\ref{theorem:pointprocess}. 
Moreover, the constraints on the Pickands coordinates are defined in Definition~\ref{def:pickandsconstraints}, which will be used to estimate the probability density function and the probability measure function of the BGPD model in Theorem~\ref{theorem:logistic}. Finally, Theorems~\ref{theorem:validitylogistic} and \ref{theorem:validitypoisson} express the requirements of a valid BGPD model based on the mean constraint on the probability measure functions of the Pickands coordinates corresponding to the BGPD models obtained in Theorems~\ref{theorem:logistic} and \ref{theorem:pointprocess}, respectively.

\begin{definition}[\textbf{Fr\'echet transformation}]
\label{def:frechet}
Suppose $(x_1,y_1),...,(x_n,y_n)$ are independent realizations of a random variable $(X,Y)$ with joint distribution $F(x,y)$. For optimum thresholds $u_{x}$ and $u_{y}$, each marginal distribution of $F$ has an approximation of the form (\ref{eqn:gpd_dist}), with the parameters $(\tilde{\sigma}_x,\xi_x)$ and $(\tilde{\sigma}_y,\xi_y)$ for $X$ and $Y$, respectively.
Let's apply Fr\'echet transformation to variables $X$ and $Y$ to induce new variables $\tilde{X}$ and $\tilde{Y}$, given by

\begin{equation}
\label{eqn:x_tilde}
    \tilde{X} = -\Big( \log \Big\{ 1-\zeta_{x} \Big[1+\frac{\xi_x (u_x-X)}{\tilde{\sigma}_{x}}\Big]^{-1/\xi_x} \Big\} \Big) ^{-1}, \hspace{.3cm} X<u_x,
\end{equation}
and
\begin{equation}
\label{eqn:y_tilde} 
    \tilde{Y} = -\Big( \log \Big\{ 1-\zeta_{y} \Big[1+\frac{\xi_y (u_y-Y)}{\tilde{\sigma}_{y}}\Big]^{-1/\xi_y} \Big\} \Big) ^{-1}, \hspace{.5cm} Y<u_y.
\end{equation}
where $\zeta_{x} = Pr(X<u_x)$ and $\zeta_{y} = Pr(Y<u_y)$.
Then, the joint distribution function of $\tilde{x}$ and $\tilde{y}$, $\tilde{F}$, has margins that are approximately standard Fr\'echet distribution, i.e., $\tilde{F}(\tilde{x}) = exp(-1/\tilde{x}), \tilde{x}>0$, $\tilde{F}(\tilde{y}) = exp(-1/\tilde{y}), \tilde{y}>0$, where $\tilde{x}$ and $\tilde{y}$ are any realizations from $\tilde{X}$ and $\tilde{Y}$, respectively \cite{evt_04}.
\end{definition}

\begin{theorem}
\label{theorem:bevt}
Let us define $M^{*}_n = (M^{*}_{\tilde{x},n}, M^{*}_{\tilde{y},n})$, where $M^{*}_{\tilde{x},n} = \max_{\substack{i=1,...,n}} \{\tilde{X}_i\}/n$, and $M^{*}_{\tilde{y},n} = \max_{\substack{i=1,...,n}} \{\tilde{Y}_i\}/n$, and $(\tilde{X}_i, \tilde{Y}_i)$ are independent vectors with standard Fr\'echet marginal distributions as expressed in (\ref{eqn:x_tilde}) and (\ref{eqn:y_tilde}), respectively, and $n$ is the number of realizations in the tail of $X$ or $Y$. Then, 
\begin{equation}
    \label{eqn:bevt_1}
    Pr\{M^{*}_{\tilde{x},n} \leq \tilde{x}, M^{*}_{\tilde{y},n} \leq \tilde{y}\} \xrightarrow{d} G(\tilde{x},\tilde{y}),
\end{equation}
where $\xrightarrow{d}$ denotes limit of distribution, and $G$ is a bi-variate non-degenerate distribution function with the form 
\begin{equation}\label{eqn:nondegenerate}
    G(\tilde{x},\tilde{y}) = \exp\{ -V(\tilde{x},\tilde{y}) \}, \hspace{0.7cm} \tilde{x}>0,\tilde{y}>0 
\end{equation}
 
\begin{equation}\label{eqn:theorem_V}
    V(\tilde{x},\tilde{y}) = 2 \int \limits_0^1 \max (\frac{\omega}{\tilde{x}},\frac{1-\omega}{\tilde{y}}) H(d\omega),
\end{equation}
and $H$ is a distribution function on $[0,1]$ satisfying the mean constraint 
\begin{equation}\label{eqn:theorem_H}
    \int \limits_0^1 \omega H(d\omega) = 1/2.
\end{equation}

Any family of distributions that arise as limits in (\ref{eqn:bevt_1}) can be considered the class of bi-variate extreme value distributions \cite{evt_04}, \cite{bgpd_01}.
\end{theorem}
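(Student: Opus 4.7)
The plan is to follow the classical Pickands / de Haan route to the representation of bi-variate max-stable laws with standard Fréchet margins, which is precisely the setting here because of the Fréchet transformation in Definition~\ref{def:frechet}. First I would verify that each marginal of $M^*_n$ already converges trivially to a standard Fréchet law: for $\tilde{X}_i$ i.i.d.\ standard Fréchet,
\begin{equation*}
\Pr\{M^{*}_{\tilde{x},n}\le \tilde{x}\}=\Pr\bigl\{\max_{i\le n}\tilde{X}_i\le n\tilde{x}\bigr\}=\bigl[e^{-1/(n\tilde{x})}\bigr]^{n}=e^{-1/\tilde{x}},
\end{equation*}
and the analogous identity holds for the $\tilde{Y}$-component. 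This fixes the margins of any subsequential limit $G$ and reduces the problem to characterising the dependence structure.

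Next I would establish max-stability. Because the pairs $(\tilde{X}_i,\tilde{Y}_i)$ are i.i.d., for every integer $k\ge 1$ we can write $M^{*}_{kn}$ as the componentwise maximum of $k$ independent copies of $M^{*}_{n}$ rescaled by $1/k$; passing to the limit yields $G^{k}(k\tilde{x},k\tilde{y})=G(\tilde{x},\tilde{y})$. Writing $G=\exp(-V)$ with $V\ge 0$ finite on $(0,\infty)^2$, this functional equation is equivalent to the homogeneity condition $V(t\tilde{x},t\tilde{y})=t^{-1}V(\tilde{x},\tilde{y})$ for every $t>0$, and the standard Fréchet margins force the boundary behaviour $V(\tilde{x},\infty)=1/\tilde{x}$ and $V(\infty,\tilde{y})=1/\tilde{y}$.

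Then I would exploit the homogeneity via a pseudo-polar change of coordinates: set $r=\tilde{x}+\tilde{y}$ and $\omega=\tilde{x}/(\tilde{x}+\tilde{y})\in[0,1]$ (this is the Pickands coordinate of Definition~\ref{def:pickand}). The degree-$(-1)$ homogeneity of $V$ translates to the fact that the associated exponent measure disintegrates as a product of $r^{-2}\,dr$ on $(0,\infty)$ and a finite Borel measure on $[0,1]$. After renormalising that angular measure to a probability measure $H$ and pulling back to the original coordinates, one arrives at
\begin{equation*}
V(\tilde{x},\tilde{y})=2\int_{0}^{1}\max\!\Bigl(\frac{\omega}{\tilde{x}},\frac{1-\omega}{\tilde{y}}\Bigr)H(d\omega),
\end{equation*}
the factor $2$ being the normalising constant absorbed into $H$. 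Finally, imposing the marginal boundary constraints derived above gives $2\int_0^1(\omega/\tilde{x})H(d\omega)=1/\tilde{x}$ and the symmetric identity for $\tilde{y}$, which simultaneously yield $\int_0^1\omega\,H(d\omega)=1/2$ and $H([0,1])=1$, exactly the mean constraint \eqref{eqn:theorem_H}.

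The main obstacle I expect is not the algebra but the measure-theoretic disintegration step: showing rigorously that a homogeneous exponent measure on $(0,\infty)^2$ factorises as a radial $r^{-2}\,dr$ part times an angular finite measure, and that this angular measure is concentrated on the compact interval $[0,1]$ rather than escaping to the axes. The cleanest way to handle this is to appeal to the general representation of $-\log G$ for max-infinitely divisible laws as the total mass of a Lévy–type exponent measure, apply the homogeneity to deduce the product form of that measure in $(r,\omega)$-coordinates, and then check tightness at $\omega\in\{0,1\}$ using the finiteness of the marginals $V(\tilde{x},\infty)$ and $V(\infty,\tilde{y})$; any remaining atomic mass at the endpoints is then harmlessly absorbed into $H$.
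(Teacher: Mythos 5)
Your proposal is sound and follows the classical de Haan--Resnick/Pickands characterization of bi-variate max-stable laws with standard Fr\'echet margins. Note that the paper does not actually prove Theorem~\ref{theorem:bevt}; it delegates the proof entirely to a cited reference, so your argument is in effect the argument that the citation stands in for. The three ingredients --- exact marginal convergence via $[e^{-1/(n\tilde{x})}]^{n}=e^{-1/\tilde{x}}$, the max-stability identity $G^{k}(k\tilde{x},k\tilde{y})=G(\tilde{x},\tilde{y})$ obtained by blocking, and the disintegration of the degree-$(-1)$ homogeneous exponent measure into $r^{-2}\,dr$ times a finite angular measure on $[0,1]$ --- are exactly the right ones, and your derivation of the mean constraint (\ref{eqn:theorem_H}) together with $H([0,1])=1$ from the two marginal boundary conditions $V(\tilde{x},\infty)=1/\tilde{x}$ and $V(\infty,\tilde{y})=1/\tilde{y}$ is correct. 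Two points deserve an explicit sentence in a full write-up: first, blocking only yields $V(k\tilde{x},k\tilde{y})=k^{-1}V(\tilde{x},\tilde{y})$ for integer $k$, and one must invoke monotonicity of $V$ in each argument to pass through the rationals to all $t>0$; second, joint convergence in (\ref{eqn:bevt_1}) is not automatic even with standard Fr\'echet margins, so the theorem is a characterization of whatever non-degenerate limits arise --- your ``subsequential limit'' framing is the right one and should be retained rather than asserting existence. Finally, your pseudo-polar coordinates $r=\tilde{x}+\tilde{y}$, $\omega=\tilde{x}/(\tilde{x}+\tilde{y})$ are the standard ones and differ by signs and a factor of $1/n$ from the paper's Definition~\ref{def:pickand}; this is harmless for the proof but worth flagging if you want your angular measure $H$ to coincide literally with the paper's $H_{pp}$.
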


\begin{proof}
Please refer to \cite{mevt_10} for the proof.
\end{proof}

Theorem~\ref{theorem:bevt} implies that the class of bi-variate extreme value distributions is in one-to-one correspondence with the set of distribution functions $H$ on $[0, 1]$ satisfying (\ref{eqn:theorem_H}). If the probability measure $H$ is differentiable with probability density $h$, integral (\ref{eqn:theorem_V}) is simplified to 
\begin{equation}\label{eqn:theorem_Vrev}
    V(\tilde{x},\tilde{y}) = 2 \int_0^1 \max (\frac{\omega}{\tilde{x}},\frac{1-\omega}{\tilde{y}}) h(\omega)d\omega.
\end{equation}

This assumes that the thresholds $u_x$ and $u_y$ are small enough to justify the limit (\ref{eqn:bevt_1}) as an approximation \cite{evt_04}. 

Theorem~\ref{theorem:bevt} also implies that for generating a class of BGPD models expressed in (\ref{eqn:nondegenerate}), it is required to obtain a parametric family for $H$ over $[0,1]$  whose mean is $0.5$ for every value of the parameter. However, in practice, it is hard to find parametric families whose mean is parameter-free and for which the integral (\ref{eqn:theorem_Vrev}) is tractable. Two approaches that can be utilized to model the bi-variate extreme value distribution are the logistic family and the Poisson point process, which will be discussed in the following.

\subsubsection{BGPD Based on Logistic Family}
\label{sec:backgroundlogistic}

\begin{theorem}
\label{theorem:logistic}
Let $(\tilde{X}, \tilde{Y})$ be independent vectors with standard Fr\'echet marginal distributions as expressed in (\ref{eqn:x_tilde}) and (\ref{eqn:y_tilde}). The logistic family $G(\tilde{x},\tilde{y})$ is a standard class expressing BGPD of $\tilde{X}$ and $\tilde{Y}$ as:
\begin{equation}
    \label{eqn:Glogistic}
    G_{l}(\tilde{x},\tilde{y}) = \exp{\big\{ -V(\tilde{x},\tilde{y})}\big\},
\end{equation}
where 
\begin{equation}
    \label{eqn:Vlogistic}
    V(\tilde{x},\tilde{y}) = \big( \tilde{x}^{-1/\alpha} + \tilde{y}^{-1/\alpha} \big)^{\alpha},
\end{equation}
$\alpha \in (0,1)$ denotes the dependency factor between variables $\tilde{x}$ and $\tilde{y}$ with the constraint $\tilde{x},\tilde{y}>0$, and $\tilde{x}$ and $\tilde{y}$ are expressed as (\ref{eqn:x_tilde}) and (\ref{eqn:y_tilde}), respectively, for large enough thresholds $u_x$ and $u_y$ \cite{evt_04}, \cite{bgpd_01}. 
\end{theorem}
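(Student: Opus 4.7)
The strategy is to verify that the logistic form (\ref{eqn:Glogistic})--(\ref{eqn:Vlogistic}) fulfills the three hypotheses of Theorem~\ref{theorem:bevt}: standard Fr\'echet margins, the spectral-integral representation (\ref{eqn:theorem_V}) of the exponent function, and the mean constraint (\ref{eqn:theorem_H}). Because the candidate $V$ is given in closed form, the task reduces to consistency checks rather than derivation from scratch, and the whole argument proceeds by matching (\ref{eqn:Vlogistic}) against the generic representation of Theorem~\ref{theorem:bevt}.

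First I would verify the marginals directly: letting $\tilde{y}\to\infty$ in (\ref{eqn:Vlogistic}) gives $V(\tilde{x},\infty)=\tilde{x}^{-1}$ and hence $G_{l}(\tilde{x},\infty)=\exp(-\tilde{x}^{-1})$, the standard Fr\'echet CDF of Definition~\ref{def:frechet}; the symmetric calculation in $\tilde{x}$ settles the other margin. This step is immediate and does not depend on $\alpha$.

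Next I would exhibit a spectral density $h$ on $(0,1)$ that reproduces $V$ through (\ref{eqn:theorem_Vrev}). Differentiating $G_{l}$ on the off-diagonal ray and reading off the Pickands decomposition suggests the natural candidate
\begin{equation*}
h(\omega)=\tfrac{1}{2}\bigl(\alpha^{-1}-1\bigr)\{\omega(1-\omega)\}^{-1-1/\alpha}\bigl\{\omega^{-1/\alpha}+(1-\omega)^{-1/\alpha}\bigr\}^{\alpha-2}.
\end{equation*}
I would then substitute this $h$ into (\ref{eqn:theorem_Vrev}), split the integration at the breakpoint $\omega^{*}=\tilde{x}/(\tilde{x}+\tilde{y})$ where $\omega/\tilde{x}=(1-\omega)/\tilde{y}$, and collapse each piece via the substitution $s=\omega^{-1/\alpha}+(1-\omega)^{-1/\alpha}$ to recover $(\tilde{x}^{-1/\alpha}+\tilde{y}^{-1/\alpha})^{\alpha}$. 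Positivity of $h$ on $(0,1)$ follows from $\alpha^{-1}-1>0$, which simultaneously forces and explains the parameter range $\alpha\in(0,1)$.

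Finally, the mean constraint (\ref{eqn:theorem_H}) would come essentially for free from the margin check: inserting $\tilde{y}=\infty$ into (\ref{eqn:theorem_Vrev}) gives $\tilde{x}^{-1}=2\int_{0}^{1}(\omega/\tilde{x})\,h(\omega)\,d\omega$, which is exactly $\int_{0}^{1}\omega\,h(\omega)\,d\omega=1/2$, so no second integration is required. The hard part of the argument is therefore concentrated in the integral identity of the third step; the indicator structure hidden inside the $\max$ makes the piecewise bookkeeping the only genuine source of technicality, but the substitution above closes it in elementary terms. As sanity checks I would verify the two boundary regimes $\alpha=1$, where $V=\tilde{x}^{-1}+\tilde{y}^{-1}$ recovers independence, and $\alpha\to 0$, where $V\to\max(\tilde{x}^{-1},\tilde{y}^{-1})$ recovers perfect positive dependence, confirming that the logistic family interpolates the full dependence spectrum permitted by Theorem~\ref{theorem:bevt}.
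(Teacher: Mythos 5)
Your proposal is correct, but it is worth noting that the paper does not actually prove Theorem~\ref{theorem:logistic} at all: its ``proof'' is a pointer to \cite{evt_04}, \cite{evt_03}, \cite{mevt_10}. What you have written is essentially the standard verification contained in those references (Coles' treatment of the logistic family), supplied in full: the marginal check $V(\tilde{x},\infty)=\tilde{x}^{-1}$, the spectral density $h(\omega)=\tfrac{1}{2}(\alpha^{-1}-1)\{\omega(1-\omega)\}^{-1-1/\alpha}\{\omega^{-1/\alpha}+(1-\omega)^{-1/\alpha}\}^{\alpha-2}$ (which is exactly the right candidate --- it coincides with $-\tfrac{1}{2}\,\partial^{2}V/\partial\tilde{x}\,\partial\tilde{y}$ evaluated at $(\omega,1-\omega)$, and is positive precisely when $\alpha\in(0,1)$), the recovery of the mean constraint from the margin identity, and the two limiting regimes $\alpha=1$ and $\alpha\to 0$. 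The one step I would tighten is the integral identity in your third paragraph: the substitution $s=\omega^{-1/\alpha}+(1-\omega)^{-1/\alpha}$ does not collapse either piece on its own, because $\omega\{\omega(1-\omega)\}^{-1-1/\alpha}$ is not proportional to $ds/d\omega$ times a function of $s$. A cleaner closure is to differentiate the split integral in $\tilde{x}$ (the boundary terms at $\omega^{*}=\tilde{x}/(\tilde{x}+\tilde{y})$ cancel because $\omega^{*}/\tilde{x}=(1-\omega^{*})/\tilde{y}$ there), which reduces the claim to
\begin{equation*}
2\int_{t}^{1}\omega\,h(\omega)\,d\omega \;=\; t^{1-1/\alpha}\bigl\{t^{-1/\alpha}+(1-t)^{-1/\alpha}\bigr\}^{\alpha-1},
\end{equation*}
an identity verified by elementary differentiation of both sides in $t$ together with the common boundary value $0$ at $t=1$; homogeneity of $V$ of order $-1$ then gives the full two-argument identity. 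Your derivation of the mean constraint as a by-product of the margin check is valid and not circular, since it uses only the already-established spectral representation. In short: the proposal is a correct, self-contained proof of a statement the paper merely cites.
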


\begin{proof}
Please refer to \cite{evt_04}, \cite{evt_03}, \cite{mevt_10} for the proof.
\end{proof}

The variables $\tilde{x}$ and $\tilde{y}$ of the logistic distribution family in (\ref{eqn:Glogistic}) are exchangeable and have correlation $\rho = 1- \alpha^2$ \cite{evt_03}. The $\alpha$ value very close to $1$ denotes strong dependence between the variables $x$ and $y$, even at moderately extreme levels.

\subsubsection{BGPD Based on Poisson Point Process}
\label{sec:backgroundpointprocess}
\begin{definition}[\textbf{Pickands coordinates}]
\label{def:pickand}
Let $(\tilde{X}, \tilde{Y})$ be independent vectors with standard Fr\'echet marginal distributions as expressed in (\ref{eqn:x_tilde}) and (\ref{eqn:y_tilde}). Let's define the transformation $T_{P}(\tilde{x},\tilde{y})$ by
\begin{equation*}
    T_{P}(\tilde{x},\tilde{y}) := \big( \frac{-\tilde{x}}{n}+\frac{-\tilde{y}}{n}, \frac{-\tilde{x}/n}{-\tilde{x}/n-\tilde{y}/n} \big) =: (\omega,r).
\end{equation*}
where $n \in N$, is the length of vector $\tilde{X}$ or $\tilde{Y}$.
$T_P(\tilde{x},\tilde{y})$ is called transformation with respect to the Pickands coordinates $(\omega,r)$, where $\omega$ is pseudo-polar angular component measuring angle in $[0,1]$ scale, and $r$ is pseudo-polar radial component measuring the distance from the origin \cite{evt_04}. This transformation is one-to-one with the inverse
\begin{equation*}
    T^{-1}_P(\omega,r) =: r\big( \omega,1-\omega \big) =: (\tilde{x},\tilde{y}).
\end{equation*}

The pseudo-polar Pickands mapping has the same geometrical interpretation as standard polar coordinates with the difference that polar coordinates use the Euclidian norm for the angular and radial component, while the Pickands coordinates use the sum norm \cite{mevt_10}-\cite{mevt_12}.
\end{definition}

\begin{theorem}
\label{theorem:pointprocess}
Assume that $(\tilde{X},\tilde{Y})$ be a vector of independent bi-variate observations with standard Fr\'echet margins satisfying (\ref{eqn:bevt_1}). Let $N_n$ denote a sequence of point processes defined as
\begin{equation}
\label{eqn:Nn}
    N_n = \{(n^{-1}\tilde{x}_1,n^{-1}\tilde{y}_1),...,(n^{-1}\tilde{x}_n,n^{-1}\tilde{y}_n)\}.
\end{equation}
where $\tilde{x}_i$ and $\tilde{y}_i$ are the $i^{th}$ realization of $\tilde{X}$ and $\tilde{Y}$, respectively. Then, $N_n$ converges to $N$, i.e., $N_n \xrightarrow{d} N$, where $N$ is a non-homogeneous Poisson process on space $A$ defined by
\begin{equation*}
    A = \{(0,\infty) \times (0,\infty) \backslash (0,\tilde{x}) \times (0,\tilde{y}) \}, 
\end{equation*}
denoting space $\{(0,\infty) \times (0,\infty)\}$ excluding sub-space $\{(0,\tilde{x}) \times (0,\tilde{y})\}$. Therefore, according to Poisson point process limit, the bi-variate extremes are modeled as 
\begin{equation}
\label{eqn:GPDpoisson}
    G_{pp}(\tilde{x},\tilde{y}) \rightarrow Pr\{N(A) = 0\} = \exp\{-\Lambda(A)\},
\end{equation}
where the intensity measure of Poisson point process, $\Lambda(A)$, given by \cite{evt_04}, \cite{mevt_12}
\begin{equation}
\label{eqn:capitalLambdaThm}
    \Lambda(A)= -2 \int_{\omega=0}^{1} \max \big(\frac{\omega}{-\tilde{x}/n},\frac{1-\omega}{-\tilde{y}/n}\big) \,H_{pp}(d\omega),
\end{equation}
has the same form as function $V(\tilde{x},\tilde{y})$ in (\ref{eqn:theorem_V}), and the probability measure function $H_{pp}(d\omega)=\int_0^1 h_{pp}(\omega)\,d\omega$ with the probability density function $h_{pp}(\omega)$ of the angular component $\omega$ based on Definition~\ref{def:pickand} \cite{evt_04}, \cite{mevt_11}, \cite{mevt_12}. 
\end{theorem}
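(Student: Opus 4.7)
The plan is to establish the weak convergence $N_n \xrightarrow{d} N$ by identifying the intensity measure of the limiting Poisson process, and then to compute the void probability $Pr\{N(A)=0\}$ in closed form via Pickands coordinates. First, I would observe that, since $(\tilde{X}_i,\tilde{Y}_i)$ are i.i.d.\ vectors with standard Fr\'echet margins by Definition~\ref{def:frechet}, the array $\{(n^{-1}\tilde{X}_i,n^{-1}\tilde{Y}_i)\}_{i=1,\ldots,n}$ consists of identically distributed, uniformly asymptotically negligible points on $(0,\infty)^2$. By the classical Poisson limit theorem for null arrays of point processes, $N_n$ converges to a Poisson process on $(0,\infty)^2$ once the intensity measure $\Lambda$ of the limit is identified on a generating class of relatively compact Borel sets bounded away from the origin.

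Next, I would pin down $\Lambda$ on sets of the form $A_{\tilde{x},\tilde{y}} = (0,\infty)^2 \setminus (0,\tilde{x}) \times (0,\tilde{y})$. Using independence across $i$ and the componentwise-maxima characterization of the limit $G(\tilde{x},\tilde{y}) = \exp\{-V(\tilde{x},\tilde{y})\}$ from Theorem~\ref{theorem:bevt}, I have $Pr\{M^{*}_{\tilde{x},n} \le \tilde{x},\, M^{*}_{\tilde{y},n} \le \tilde{y}\} = [1-p_n]^n$ where $p_n := Pr\{(n^{-1}\tilde{X}_1, n^{-1}\tilde{Y}_1) \in A_{\tilde{x},\tilde{y}}\}$, and by hypothesis the left side converges to $\exp\{-V(\tilde{x},\tilde{y})\}$. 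Hence $n p_n \to V(\tilde{x},\tilde{y})$, which identifies $\Lambda(A_{\tilde{x},\tilde{y}}) = V(\tilde{x},\tilde{y})$. Carath\'eodory extension propagates $\Lambda$ to all Borel subsets of the ambient space, giving the full intensity measure of the Poisson limit. The standard void-probability identity $Pr\{N(A) = 0\} = \exp\{-\Lambda(A)\}$ for a Poisson process then yields (\ref{eqn:GPDpoisson}).

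Finally, I would rewrite the intensity measure in its angular representation via the Pickands transformation $T_P$ of Definition~\ref{def:pickand}. Under the change of variables $(\tilde{x},\tilde{y}) \mapsto (\omega,r)$, the measure $\Lambda$ factorizes into a radial component of the form $r^{-2}\,dr$, which is the trace of standard Fr\'echet scaling on radii in the sum norm, and an angular measure $H_{pp}$ supported on $[0,1]$ that encodes the dependence structure. Integrating out the radial part against the indicator of the complementary region $A$ collapses the integrand to the kernel $\max\bigl(\omega/(-\tilde{x}/n),\,(1-\omega)/(-\tilde{y}/n)\bigr)$ after elementary algebra, producing the representation (\ref{eqn:capitalLambdaThm}).

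The hard part, I expect, will be the rigorous change of variables from Cartesian to Pickands coordinates and the justification that the radial and angular components decouple with the required scaling. Because Pickands coordinates use the sum norm rather than the Euclidean norm, the Jacobian factor and the domain constraint $\omega + (1-\omega) = 1$ must be tracked carefully, and the mean condition $\int_0^1 \omega\, H_{pp}(d\omega) = 1/2$ from (\ref{eqn:theorem_H}) should emerge as a consistency requirement ensuring that the induced marginals of the limit remain standard Fr\'echet, as already imposed in Theorem~\ref{theorem:bevt}.
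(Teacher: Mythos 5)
Your proposal is correct, and it is in fact more self-contained than the argument the paper gives. The paper's proof takes the factorized Pickands-coordinate intensity $\lambda(r,\omega)=2\,\frac{dr}{r^{2}}\,H_{pp}(d\omega)$ as given (citing the references), and then only carries out the radial integration over $A$ — integrating $2r^{-2}\,dr$ up to $r_{\max}=\max\{\frac{-\tilde{x}/n}{\omega},\frac{-\tilde{y}/n}{1-\omega}\}$ — to arrive at (\ref{eqn:capitalLambdaThm}); the convergence $N_n\xrightarrow{d}N$ itself and the claim that $\Lambda$ ``has the same form as $V$'' are asserted rather than derived. You instead start upstream: the null-array Poisson limit theorem gives existence of the Poisson limit, and the identity $Pr\{M^{*}_{\tilde{x},n}\le\tilde{x},M^{*}_{\tilde{y},n}\le\tilde{y}\}=[1-p_n]^n\to\exp\{-V(\tilde{x},\tilde{y})\}$ forces $np_n\to V$, which pins down $\Lambda(A_{\tilde{x},\tilde{y}})=V(\tilde{x},\tilde{y})$ on a generating class before any coordinate change. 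This is the standard textbook route (it is essentially the argument in the cited sources) and it buys you exactly the step the paper leaves implicit — why the void-probability exponent coincides with $V$. The two arguments then meet at the same final computation, the change of variables to Pickands coordinates with radial part $r^{-2}dr$ and angular measure $H_{pp}$; you leave the Jacobian bookkeeping for the sum-norm coordinates (and the sign conventions forced by the lower-tail transformation, which make $r<0$ here) as the ``hard part,'' but that is precisely the elementary evaluation the paper writes out in (\ref{eqn:FinalcapitalLambda}), so nothing essential is missing.
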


\begin{proof}
Let $r$ and $\omega$ denote the Pickands coordinates of $\tilde{x}$ and $\tilde{y}$ given by
\begin{equation}
    r = \frac{-\tilde{x}}{n}+\frac{-\tilde{y}}{n}, \hspace{1cm} and \hspace{1cm} \omega = \frac{-\tilde{x}/n}{r},
\end{equation}
where $\tilde{x}$ and $\tilde{y}$ have standard marginal Fr\'echet distributions. Then, the intensity function of $N$ on space $A$ in Theorem~\ref{theorem:pointprocess} is \cite{evt_04}, \cite{mevt_11}, \cite{mevt_12}
\begin{equation}
\label{eqn:lambda}
    \lambda(r,w) = 2\frac{dr}{r^2}\,H_{pp}(d\omega).
\end{equation}
Additionally, let $N_n$ be the point process defined as (\ref{eqn:Nn}) with intensity function $\lambda(r,\omega)$ in (\ref{eqn:lambda}). Then, $\Lambda(A)$ in Theorem~\ref{theorem:pointprocess} is given by
\begin{align*}
\begin{split}
    \Lambda(A) = \int_{A} 2\frac{dr}{r^2} \,H_{pp}(d\omega)\
    = \int_{\omega=0}^{1} \int_{-\infty}^{r_{max}} 2\frac{dr}{r^2} \,H_{pp}(d\omega),
\end{split}
\end{align*}
where $r_{max}$ is the maximum of $\frac{-\tilde{x}/n}{\omega}$ and $\frac{-\tilde{y}/n}{1-\omega}$, depending on how $r$ is defined based on $\omega=\frac{-\tilde{x}/n}{r}$ or $1-\omega= \frac{-\tilde{y}/n}{r}$. Therefore,
\begin{align}
\label{eqn:FinalcapitalLambda}
\begin{split}
    \Lambda(A)
    = \int_{\omega=0}^{1} \int_{-\infty}^{r=\max\{ \frac{-\tilde{x}/n}{\omega},\frac{-\tilde{y}/n}{1-\omega}\}} 2\,\frac{dr}{r^2} \,H_{pp}(d\omega)\\
    = -2 \int_{\omega=0}^{1} \max \big(\frac{\omega}{-\tilde{x}/n},\frac{1-\omega}{-\tilde{y}/n}\big) \,H_{pp}(d\omega).
\end{split}
\end{align}

\end{proof}

\begin{remark}
\label{remark:meanconstraint}
The Pickands probability measure $H_{pp}(\omega)$ has the following property \cite{mevt_12}- \nocite{pickands_01}\nocite{pickands_02}\cite{pickands_03}: %
\begin{equation}
\label{eqn:pickandsdensityvalid}
    \int_{[0, 1]} \omega\,H_{pp}(d\omega) =  \int_{[0, 1]} (1-\omega)\,H_{pp}(d\omega).
\end{equation}
\end{remark}

The fact expressed in Remark~\ref{remark:meanconstraint} will be used in Theorem~\ref{theorem:validitypoisson} to assess the validity of the Poisson point process-based BGPD.

\subsubsection{BGPD Model Assessment}
\label{sec:background_assessment}
Any distribution function $H$ defined on the space $[0, 1]$ in (\ref{eqn:theorem_V}) that satisfies the mean constraint in (\ref{eqn:theorem_H}), gives rise to a valid limit in (\ref{eqn:bevt_1}) \cite{evt_04}. Therefore, if $G(\tilde{x},\tilde{y})$ is a valid model to estimate the tail of bi-variate extremes, its corresponding probability measure function $H(\omega)$ should satisfy the $0.5$ mean constraint according to Theorem~\ref{theorem:bevt}.

\begin{definition}[\textbf{Pickands constraints}]
\label{def:pickandsconstraints}
Let $(\tilde{X}, \tilde{Y})$ be independent vectors with standard Fr\'echet marginal distributions as expressed in (\ref{eqn:x_tilde}) and (\ref{eqn:y_tilde}), and corresponding Pickands coordinates $(r,\omega)$. Then, for a radial cut-off point $r_0 < 0$ close to $0$, we have the following Pickands constraints \cite{mevt_10}:
\begin{enumerate}
    \item Conditional on $r>r_0$, the radial and angular components of the Pickands coordinates ($\omega,r$), are independent. 
    \item Conditional on $r>r_0$, the radial component of the Pickands coordinates is uniformly distributed. Therefore,
    \begin{equation}
        \label{eqn:densityC} 
        Pr(r \ge R | r>r_0) = \frac{R}{r_0}, \hspace{1cm} r_0 \leq R \leq 0.
    \end{equation}
\end{enumerate}
\end{definition}

\begin{theorem}
\label{theorem:validitylogistic}
If the BGPD model based on the logistic distribution family $G_{l}(\tilde{x},\tilde{y})$, given by (\ref{eqn:Glogistic}), is a valid model to estimate the tail of the bi-variate extremes, the mean constraint defined in (\ref{eqn:theorem_H}) should be satisfied on $H_{l}(\omega)=\int h_{l}(\omega)\,d\omega$, where
\begin{equation}
\label{eqn:densityZ}
    h_{l}(\omega) = \frac{\phi(\omega)}{\mu},
\end{equation}
$\mu = \int\phi(\omega)\mathrm{d}\omega >0$, and $\phi(\omega)$ is the Pickands density function given by
\begin{equation}
\label{eqn:phi_omega_r}
    \phi(\omega) = |r| \Big( \frac{\partial ^2 }{\partial \tilde{x} \partial \tilde{y}} G_{l} \Big) \big(T_P^{-1}(\omega,r)\big), \hspace{1cm} r>r_0,
\end{equation}
where $G_{l}$ is the BGPD expressed as (\ref{eqn:Glogistic}) and $r_0$ is the optimum cut-off point obtained based on the constraints on the Pickands coordinates in Definition~\ref{def:pickandsconstraints}.
\end{theorem}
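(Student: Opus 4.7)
The plan is to use Theorem~\ref{theorem:bevt} as a characterization result: any valid bi-variate extreme value distribution must admit a representation $\exp\{-V(\tilde{x},\tilde{y})\}$ with $V$ of the form (\ref{eqn:theorem_V}) for some probability measure $H$ on $[0,1]$ whose mean equals $1/2$. So if $G_l$ in (\ref{eqn:Glogistic}) is to be a valid BGPD, there must exist such an $H_l$ associated with $G_l$, and the task reduces to (a) identifying $H_l$ explicitly from $G_l$ and (b) transcribing the mean constraint of Theorem~\ref{theorem:bevt} into a constraint on $H_l$.

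First I would derive the joint density of the Pickands coordinates from $G_l$ via a change of variables. Since the inverse Pickands map $T_P^{-1}(\omega,r)=(r\omega,r(1-\omega))$ has Jacobian determinant of modulus $|r|$, the joint density of $(\omega,R)$ at $(\omega,r)$ equals $|r|$ times the mixed partial derivative of $G_l$ evaluated at $T_P^{-1}(\omega,r)$. This is exactly the expression in (\ref{eqn:phi_omega_r}). Next I would invoke the Pickands constraints of Definition~\ref{def:pickandsconstraints}: conditional on $r>r_0$, the angular and radial components are independent, and $r$ is uniformly distributed on $[r_0,0]$. Independence forces the joint density to factorize into a function of $\omega$ alone times a constant density in $r$, which legitimizes writing $\phi$ as a function of $\omega$ only. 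Normalizing by $\mu=\int_0^1\phi(\omega)\,d\omega$ then yields the proper probability density $h_l(\omega)=\phi(\omega)/\mu$ of the Pickands angular component induced by $G_l$.

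With $H_l(d\omega)=h_l(\omega)\,d\omega$ thus identified, the final step is a direct appeal to Theorem~\ref{theorem:bevt}: the mean of $H_l$ must equal $1/2$ for $G_l$ to lie in the bi-variate extreme value class. Writing this out reproduces exactly the mean constraint (\ref{eqn:theorem_H}) applied to $H_l$, which is the conclusion asserted by the theorem.

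I expect the main obstacle to be the rigorous justification that the product $|r|\,(\partial^2 G_l/\partial\tilde{x}\partial\tilde{y})(T_P^{-1}(\omega,r))$ genuinely separates into a function of $\omega$ alone for $r\in(r_0,0)$. Strictly speaking this uses the homogeneity of the logistic exponent $V(\tilde{x},\tilde{y})=(\tilde{x}^{-1/\alpha}+\tilde{y}^{-1/\alpha})^\alpha$, which makes the radial dependence of the mixed partial scale out cleanly; the Pickands constraints are precisely what guarantee this separation in the limit regime $r_0\uparrow 0$. Careful bookkeeping of the sign conventions and integration bounds is also needed, since the Fr\'echet transformation in (\ref{eqn:x_tilde})–(\ref{eqn:y_tilde}) produces negative $\tilde{x},\tilde{y}$ values and hence a negative cut-off $r_0$, which has to be handled consistently in both the Jacobian and the normalization integral.
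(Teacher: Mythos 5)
The paper does not actually supply an argument for Theorem~\ref{theorem:validitylogistic}; its ``proof'' is a pointer to \cite{mevt_10} and \cite{mevt_12}. Your outline follows the route taken in those references: change variables to Pickands coordinates (Jacobian modulus $|r|$, giving (\ref{eqn:phi_omega_r})), use the conditional independence and radial uniformity of Definition~\ref{def:pickandsconstraints} to reduce $\phi$ to a function of $\omega$ alone, normalize to get $h_{l}=\phi/\mu$, and then appeal to the mean constraint (\ref{eqn:theorem_H}) from Theorem~\ref{theorem:bevt}. So the strategy is the right one, and your closing remarks correctly flag the two delicate points (homogeneity of the logistic exponent and the sign conventions forced by the paper's mixed Fr\'echet/negative-radial notation).

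There is, however, one substantive gap in the last step. Theorem~\ref{theorem:bevt} imposes the mean-$1/2$ constraint on the \emph{spectral} measure $H$ appearing in the representation $V(\tilde{x},\tilde{y})=2\int_0^1\max(\omega/\tilde{x},(1-\omega)/\tilde{y})\,H(d\omega)$, whereas the object in the theorem you are proving is $H_{l}$, the \emph{conditional angular distribution} of a $G_{l}$-distributed pair given $r>r_0$. These are different measures a priori, and ``writing out the mean constraint for $H_l$'' is not a direct appeal to Theorem~\ref{theorem:bevt} until you have shown they coincide (exactly, by homogeneity, or asymptotically as $r_0\uparrow 0$). Concretely, you need to differentiate the max-representation of $V$ twice to express $\partial^2 G_{l}/\partial\tilde{x}\partial\tilde{y}$ in terms of the spectral density $h$, substitute $T_P^{-1}(\omega,r)=r(\omega,1-\omega)$, and verify that the $r$-dependence cancels so that $\phi(\omega)\propto h(\omega)$; for the logistic family this is a finite computation using $V(\tilde{x},\tilde{y})=(\tilde{x}^{-1/\alpha}+\tilde{y}^{-1/\alpha})^{\alpha}$ and its degree-$(-1)$ homogeneity. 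Your proposal asserts this identification (``the proper probability density of the Pickands angular component induced by $G_l$'') rather than proving it; since it is precisely the step that transfers the constraint (\ref{eqn:theorem_H}) from $H$ to $H_{l}$, it is the one piece that cannot be left as a gesture.
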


\begin{proof}
Please refer to \cite{mevt_10} and \cite{mevt_12} for the proof.
\end{proof}

\begin{theorem}
\label{theorem:validitypoisson}
If the BGPD model $G_{pp}(\tilde{x},\tilde{y})$ based on the Poisson point process approach given by (\ref{eqn:GPDpoisson}) is a valid model to estimate the tail of the bi-variate extremes, the mean constraint defined in (\ref{eqn:theorem_H}) is always satisfied for the $H_{pp}(\omega)$ function in (\ref{eqn:capitalLambdaThm}).
\end{theorem}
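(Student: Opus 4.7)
The plan is to derive the mean constraint~(\ref{eqn:theorem_H}) for $H_{pp}$ as an immediate algebraic consequence of Remark~\ref{remark:meanconstraint} combined with the normalization $\int_0^1 H_{pp}(d\omega)=1$ inherited from $H_{pp}$ being a probability measure on $[0,1]$. Once these two ingredients are assembled, the value $1/2$ is forced, which is exactly what the word ``always'' in the statement is asserting.

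Concretely, I would start from the symmetry identity of Remark~\ref{remark:meanconstraint},
\begin{equation*}
\int_0^1 \omega\,H_{pp}(d\omega) \;=\; \int_0^1 (1-\omega)\,H_{pp}(d\omega),
\end{equation*}
expand the right-hand side by linearity, and substitute $\int_0^1 H_{pp}(d\omega)=1$ to obtain $\int_0^1 \omega\,H_{pp}(d\omega) = 1-\int_0^1 \omega\,H_{pp}(d\omega)$. Rearranging yields $2\int_0^1 \omega\,H_{pp}(d\omega)=1$, i.e., $\int_0^1 \omega\,H_{pp}(d\omega)=1/2$, which is precisely the mean constraint~(\ref{eqn:theorem_H}).

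The only point that requires a brief justification is why Remark~\ref{remark:meanconstraint} applies universally to every $H_{pp}$ produced by the Poisson point process construction in Theorem~\ref{theorem:pointprocess}, rather than only to a special parametric sub-family. This follows from the structure of the intensity $\lambda(r,\omega) = 2\,r^{-2}\,dr\,H_{pp}(d\omega)$ derived in the proof of Theorem~\ref{theorem:pointprocess} together with the fact that both coordinates $\tilde{X}$ and $\tilde{Y}$ are standardized to unit Fr\'echet: exchanging $\tilde{x}$ and $\tilde{y}$ (equivalently, sending $\omega \mapsto 1-\omega$) must leave each marginal contribution invariant, which produces the identity in Remark~\ref{remark:meanconstraint} for every admissible $H_{pp}$. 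I therefore do not anticipate a hard step in the argument; the substantive work has already been absorbed into Theorem~\ref{theorem:pointprocess} and Remark~\ref{remark:meanconstraint}, and the present theorem is a clean one-line corollary of combining them with the unit total mass of $H_{pp}$.
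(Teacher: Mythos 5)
Your argument is correct and is essentially identical to the paper's own proof: both start from the symmetry identity of Remark~\ref{remark:meanconstraint}, expand by linearity, use $\int_{[0,1]} H_{pp}(d\omega)=1$, and rearrange to get $2\int_{[0,1]}\omega\,H_{pp}(d\omega)=1$. Your extra paragraph justifying why the remark applies to every $H_{pp}$ from the construction is a reasonable addition but does not change the route.
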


\begin{proof}
Let $r$ and $\omega$ denote the Pickands coordinates of $\tilde{x}$ and $\tilde{y}$ based on Definition~\ref{def:pickand}. Then, referring to (\ref{eqn:pickandsdensityvalid}), we have 
\begin{equation*}
    2\int_{[0,1]} \omega\,H_{pp}(d\omega) = \int_{[0,1]}H_{pp}(d\omega),
\end{equation*}
where $\int_{[0,1]}H_{pp}(d\omega) = 1$, and therefore, the $0.5$ mean constraint defined in (\ref{eqn:theorem_H}) is satisfied for the probability measure function $H_{pp}(\omega)$.
\end{proof}

\section{Proposed Channel Modeling Methodology}
\label{sec:methodology}
We propose a novel BEVT-based channel modeling methodology with the goal of estimating the joint lower tail statistics of multiple channels in the ultra-reliable regime.
The methodology consists of the following steps: The sequences of measured received power samples are converted into sequences of i.i.d. samples by removing their dependency via declustering, where the samples are divided into multiple clusters, each of which includes consecutive dependent observations, and clusters are separated by a specific sample gap to ensure the independency between clusters \cite{MehrniaTWC}. Upon applying EVT to the resulting sequence of i.i.d. samples, the optimum thresholds are determined.
Since the bi-variate analysis is restricted to the time intervals in which both thresholds are exceeded, we keep only the exceedances that occur in the same time interval. Otherwise, we ignore the exceedance in any sequence. Thereafter, the parameters of the UGPD associated with the optimum thresholds are estimated by using the MLE. Then, we assess the validity of the fitted UGPD model to the tail distribution by using the probability plots, including the probability/probability (PP) plot and the quantile/quantile (QQ) plot. Afterward, we model the inter-relationship of bi-variate extremes based on Theorem~\ref{theorem:bevt} by applying two methods, logistic distribution and Poisson point process. In the logistic distribution approach, upon determining the dependency factor $\alpha$ between the Fr\'echet transformed variables in Definition~\ref{def:frechet}, we model the tail of the joint PDF based on Theorem~\ref{theorem:logistic} and verify the model according to Theorem~\ref{theorem:validitylogistic} by using the doubled-transformed data to the Pickands coordinate based on Definition~\ref{def:pickand}. In the Poisson point process approach, based on Theorem~\ref{theorem:pointprocess}, we transform the data sequence in two steps: First, based on the Fr\'echet transformation in Definition~\ref{def:frechet}, and second, based on Pickands coordinates in Definition~\ref{def:pickand}. Then, we determine the probability measure of the Pickands angular coordinate $H_{pp}(\omega)$ and the intensity measure of Poisson point process $\Lambda(A)$. Finally, we assess the validity of the Poisson point process-based bi-variate model based on Theorem~\ref{theorem:validitypoisson}.
The proposed algorithm for the bi-variate extremes is depicted in Fig.~\ref{fig:maindiagram} and explained in detail next.

\begin{figure*}[ht]
    \centering
    \includegraphics[width=\linewidth]{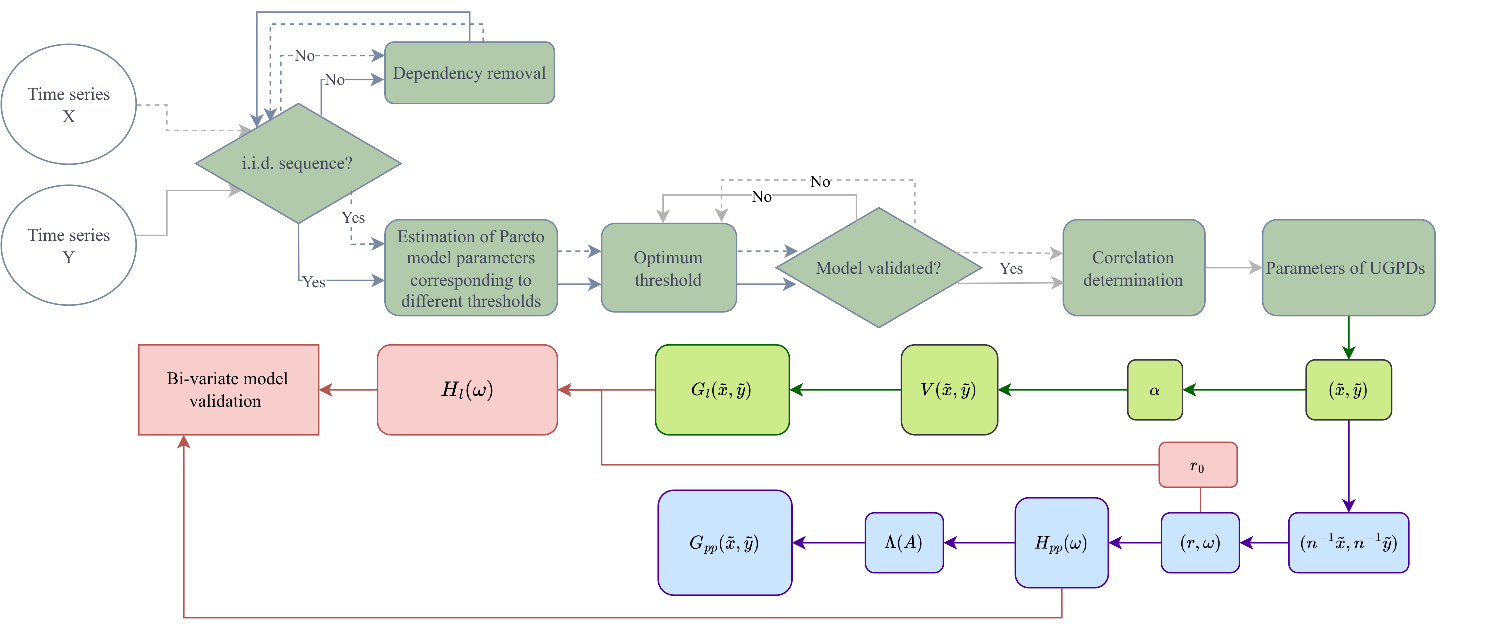}
    \caption{Flowchart of the proposed multi-channel modeling framework.}
    \label{fig:maindiagram}
\end{figure*}
\subsection{Modeling the Uni-variate Extremes}
\label{sec:methodologyUGPD}
\subsubsection{Declustering Approach}
In the declustering approach, first, we assume a low enough threshold $u$ and start the first cluster by the first sample $x_i$ below this threshold. Then, all the successive samples below threshold $u$ followed by the first sample of the initial cluster are assigned to the first cluster. Right after observing a sample over $u$, we let the cluster continue for $mg$ more consecutive values and then, if no value below $u$ is detected, close the cluster. The next cluster starts with the following value below the threshold $u$. Upon determination of the clusters of samples, we extract the minimum value of each cluster, apply EVT to the i.i.d. cluster minima, and model their tail distribution by using GPD. The declustering approach is based on the fact that the minimum values of the clusters are far enough to be considered independent and identically distributed \cite{MehrniaTWC}, \cite{evt_04}.

\subsubsection{Optimum Threshold Determination}
Determination of the optimum threshold is of paramount importance as it specifies where the tail of the distribution starts and distinguishes between extreme events happening rarely and non-extreme values. The optimum threshold of each uni-variate case is determined by using the mean residual life (MRL) and parameter stability methods. Based on the MRL method, $u$ is the optimum threshold if $u$ is the highest threshold below which the expected value of samples exceeding $u$ is a linear function of $u$, i.e., $E(u-X|X<u)$ is linear against $u$. The remarkable advantage of the MRL method is its simplicity, as it can be applied to the data sequence prior to the estimation of the UGPD parameters. However, due to the lower precision of the MRL method compared to the parameter stability method, it is sometimes difficult to obtain the optimum threshold explicitly. Therefore, the MRL method is usually utilized as a complementary method or in the case that the optimum threshold determination does not require high precision.
The parameter stability method states that the optimum threshold $u$ is the highest threshold below which the estimated shape and modified scale parameters of the UGPD fitted to the tail distribution associated with a variety of thresholds is a linear function of the threshold. It is worth noting that the modified scale parameter is defined as $\sigma^{*} = \tilde\sigma_{u} - \xi u$. Additionally, the linearity relation is assessed by using the R-squared statistical measure, denoted by $R^{2}$ \cite{MehrniaTWC}-\cite{MehrniaTVT}.

\subsubsection{Model Validity Assessment}
The validity of the GPD model is assessed by using probability plots, i.e., probability/probability (PP) and quantile/quantile (QQ) plots. These plots are graphical techniques used to assess the validity of the models fitted to the empirical values. In the PP plot, we plot the empirical CDF of the occurrence of an extreme value versus the corresponding CDF obtained by the GPD, while in the QQ plot, we plot the empirical extreme quantile versus the corresponding value obtained by the inverse of GPD \cite{MehrniaTWC}. If the GPD appropriately models the extreme values exceeding threshold $u$, then both PP and QQ plots should fit the unit diagonal line, i.e., the $45^\circ$ line \cite{evt_04}, \cite{evt_01}.

\subsection{Evaluation of the Correlation Coefficient}
The first step in the bi-variate extreme analysis is to assess the amount of dependence in the tails \cite{mevt_11}. 
However, since the bi-variate analysis is restricted to those time intervals in which both thresholds $u_x$ and $u_y$ are exceeded, before determining the correlation among the tail samples, we are required to revise the tail samples by removing those exceedances happening in one sequence but not in the other one, at the same time interval \cite{evt_04}. Accordingly, we consider a time interval consisting of $M$ samples and check if $u_x$ and $u_y$ are exceeded simultaneously in sequences $X$ and $Y$, respectively. If so, we capture the minima of clusters; otherwise, we ignore the minima of each cluster \cite{evt_04}, \cite{mevt_11}. Then, EVT is applied to the obtained exceedances to determine the UGPD parameters.

Upon determining the extremes of each sample sequence, if the extremes of the two sequences are independent, no bi-variate modeling like (\ref{eqn:bevt_1}) is required. Otherwise, the existence of correlation among the tail samples suggests the requirement of investigating the inter-relationship of the extreme values of receivers.
On the other hand, correlation among the received powers in the total samples is used to assess the feasibility of the spatial diversity. If the correlation coefficient among the total samples is between $0.1$ and $0.5$, the spatial diversity is suggested \cite{corcoef_01}. Otherwise, the employment of spatial diversity is pointless due to the concurrent fading at different links. If the spatial diversity is feasible and the correlation among the tail samples is high enough, we are required to analyze the bi-variate tail characteristics for a URC system.

\subsection{Modeling the Multivariate Extremes Based on the Logistic Distribution Approach}
\label{sec:BGPD_logistic}
\subsubsection{Data Conversion Based on the Fr\'echet Transformation} 
Applying the Fr\'echet transformation by using the Definition~\ref{def:frechet} on the obtained tail sequences $X$ and $Y$ from Section~\ref{sec:methodologyUGPD}, we induce variables $\tilde{X}$ and $\tilde{Y}$ whose marginal distribution functions have Fr\'echet distribution for $x < u_x$ and $y< u_y$, approximately. This transformation is required as the input of the bi-variate extreme modeling approach is expected to have the Fr\'echet distribution \cite{evt_04}, \cite{evt_03}.  

\subsubsection{Determining the Dependency Factor $\alpha$}
To determine the dependency parameter $\alpha$ of the logistic model in (\ref{eqn:Vlogistic}), we either estimate the correlation coefficient $\rho$ between the variables $\tilde{x}$ and $\tilde{y}$ and then, calculate $\alpha$ as $\sqrt{1-\rho}$, or directly estimate it by MLE, where the log-likelihood function is defined as 
\begin{equation}
    l(\tilde{x},\tilde{y}|\alpha) = \sum_{i=1}^{n} \ln \, v_{\alpha}(\tilde{x}_i,\tilde{y}_i), \end{equation}
    where,
\begin{equation}
        v_{\alpha}(\tilde{x},\tilde{y}) = \frac{\partial ^2}{\partial \tilde{x} \partial \tilde{y}} V(\tilde{x},\tilde{y}),
\end{equation}   
and $V(\tilde{x},\tilde{y})$ is the logistic distribution function defined in (\ref{eqn:Vlogistic}).

\subsubsection{BGPD Model Based on the Logistic Distribution}
The bi-logistic family of distributions formulated in (\ref{eqn:Vlogistic}) is used to model the bi-variate extremes based on Theorem~\ref{theorem:bevt}. Upon obtaining the dependency parameter $\alpha$, as well as the Fr\'echet transformed variables $\tilde{x}$ and $\tilde{y}$, we build function (\ref{eqn:Vlogistic}) and then, by taking the exponential of its negative function, we determine the BGPD as expressed in (\ref{eqn:Glogistic}). If this BGPD model is reliable to characterize the tail of the bi-variate distribution, the mean constraint on the probability measure function defined in Theorem~\ref{theorem:validitylogistic} should be satisfied.

\subsection{Modeling the Multivariate Extremes Based on the Poisson Point Process Approach}
\label{sec:bgpd_pointprocess}
\subsubsection{Pseudo-polar Pickands Coordinates Transformation}
In Poisson point process-based approach for modeling the bi-variate extremes, it is required first to transform the data from Cartesian to the polar coordinate: $(\tilde{x},\tilde{y})\rightarrow(r,\omega)$. To this end, we apply pseudo-polar Pickands transformation based on Definition~\ref{def:pickand} that induces two new variables, radial component $r$ and angular component $\omega$.

\subsubsection{Determining the Probability Measure Function of the Radial and Angular Components}
The CDF of the probability measure function of the angular component $H_{pp}(\omega)$ is determined based on Theorem~\ref{theorem:pointprocess}. This probability measure will be used in the next step to determine the intensity of Poisson point process $\Lambda(A)$ and, later, to assess the validity of the proposed Poisson point process-based BGPD, according to Theorem~\ref{theorem:validitypoisson}.

\subsubsection{BGPD Model Based on the Poisson Point Process}
Upon determining the angular component probability measure function $H_{pp}(\omega)$, we compute the density function of the Poisson point process $\Lambda(A)$ for the defined space $A$ based on Theorem~\ref{theorem:pointprocess}. $\Lambda(A)$ is actually equivalent to $V(\tilde{x},\tilde{y})$ in the logistic distribution based-BGPD model. Afterward, the BGPD model based on the Poisson point process approach is determined as $\exp(-\Lambda(A))$.

\subsection{BGPD Model Validation}
\label{sec:bgpd_modelvalidity}
\subsubsection{Model Validation for Logistic-based BGPD}
According to Theorem~\ref{theorem:validitylogistic}, if the logistic-based BGPD is a valid model to estimate the tail of the bi-variate extremes, the probability measure function $H_{l}(\omega)$, conditional on $\{\omega: r>r_0\}$, needs to satisfy the $0.5$ mean constraint, i.e., $\int_{\omega} \omega \, H_{l}(d\omega) = 0.5$, where $\omega$ is the Pickands angular component of $\tilde{x}$ and $\tilde{y}$.

To determine $r_0$ based on Definition~\ref{def:pickandsconstraints}, a plot of $r$ versus $\omega$ for $r<r_0$, referred to as the $r$-$\omega$ plot, can be utilized to address the first constraint in Definition~\ref{def:pickandsconstraints} by checking the dependency between $r$ and $\omega$. The dependency of $r$ and $\omega$ is assessed based on the correlation results in which $r$ and $\omega$ are independent if their correlation is less than the critical value $0.05$. Additionally, to address the second constraint in Definition~\ref{def:pickandsconstraints}, the distribution of the radial components $r_0<r<0$ is expected to be fitted to the uniform distribution, where $Pr(r>R|r>r_0)=\frac{R}{r_0}$.

\subsubsection{Model Validation for Poisson Point Process-based BGPD}
According to Theorem~\ref{theorem:validitypoisson}, the Poisson point process-based BGPD is, by default, a valid model to estimate the tail of the bi-variate extremes as its intensity function $H_{pp}(\omega)$ satisfies the mean constraint. Therefore, the constraint $\int_{\omega} \omega \, H_{pp}(d\omega) = 0.5$ is insured in Poisson point process-based BGPD while determining $H_{pp}(\omega)$.

The complexity of the proposed MEVT-based channel modeling framework is $O(n\,N_{Tx}\,N_{Rx})$, where $N_{Tx}$ and $N_{Rx}$ are the number of transmitters and receivers, respectively, and $n$ is the number of the training samples for individual channel sequences.

\section{Numerical Results}
\label{sec:numerical_results}
The goal of this section is to evaluate the performance of the proposed channel modeling algorithm in estimating the BGPD model fitted to the joint distribution of the channel data sequences based on two approaches: the logistic distribution approach and  the Poisson point process approach, and also compare their performances with the traditional extrapolation-based approaches to estimate the statistics of the channel tail for a system operating in the spatial diversity in MIMO-URC. In the traditional extrapolation-based approach, upon estimating the distribution of the existing channel data for the reliability order of $10^{-3}$-$10^{0}$ PER \cite{vehicular_01}-\cite{vehicular_02} for individual channel data sequences, we determine their joint probability distribution, and then, extrapolate it towards the ultra-reliable region $10^{-9}$-$10^{-5}$ PER \cite{urllc_02}. 

We have collected channel measurement data within the engine compartment of Fiat Linea under various engine and driving scenarios at $60$ Gigahertz (GHz) by using a Vector Network Analyzer (VNA) (R$\And$S$\textsuperscript{\textregistered}$ ZVA$67$). The VNA is connected to the transmitter and receivers through the R$\And$S$\textsuperscript{\textregistered}$ ZV-Z$196$ and PE$361$ port cables, respectively, with $610$ millimeter (mm) length as shown in Fig~\ref{fig:fiatlinea}. The output power range at the port can be decreased to $-100$~dBm to measure deep fading. The transmitter is an omnidirectional antenna operating from $58$~GHz to $63$~GHz with $0$ decibel isotropic (dBi) nominal gain. The receivers are horn antennas operating between $50$-$75$ GHz with a nominal $24$ dBi gain and $11^\circ$ and $9.5^\circ$ horizontal and vertical half power beamwidth, respectively. The antennas are connected to the coax cables through the waveguide to the coax adaptor operating at the frequency span of $50-65$~GHz, with insertion loss $0.5$ decibel (dB) and impedance $50$ Ohm ($\Omega$).

The data were collected while driving the car at the Koc University campus and emulating different scenarios, including starting/stopping the car, moving up/down on a ramp, and driving on a flat road. More than $10^{6}$ successive samples are captured from each receiver antenna for about $1.5$~hours with a time resolution of $3$ ms. We use \textsc{MATLAB} for analyzing the data and the implementation of the proposed algorithm as well as the traditional extrapolation-based approach \cite{urllc_02}. Meanwhile, the estimation error of the proposed methodology and the extrapolated-based approach have been reported by means of the Root Mean Square Error (RMSE) metric.

In the following, first, in Section~\ref{sec:performance_UGPD}, we provide the numerical results in the determination of the optimum threshold over which the tail statistics are derived for two uni-variate channel data sequences obtained from receivers Rx$1$ and Rx$2$, and then, validate the tail model corresponding to the optimum threshold by means of the probability plots. Upon determining the correlation among the tail samples in Section~\ref{sec:performance_corr}, we apply the logistic distribution approach to model the tail distribution of the bi-variate extremes in Section~\ref{sec:performance_BGPD_logistic}. Next, in Section~\ref{sec:performance_BGPD_pointprocess}, we estimate the tail distribution of the bi-variate extremes by applying the Poisson point process approach. Afterward, in Section~\ref{sec:performanceValidation}, we assess the validity of the determined BGPD models based on logistic distribution and Poisson point process approaches. Finally, we compare the performance of both proposed methodologies, logistic distribution-based and Poisson point process-based approaches, to that of the traditional extrapolation-based technique in the estimation of the bi-variate channel tail statistics in  Section~\ref{sec:performanceComparison}.

\begin{figure}[ht]
    \centering
    \includegraphics[width=0.8\columnwidth]{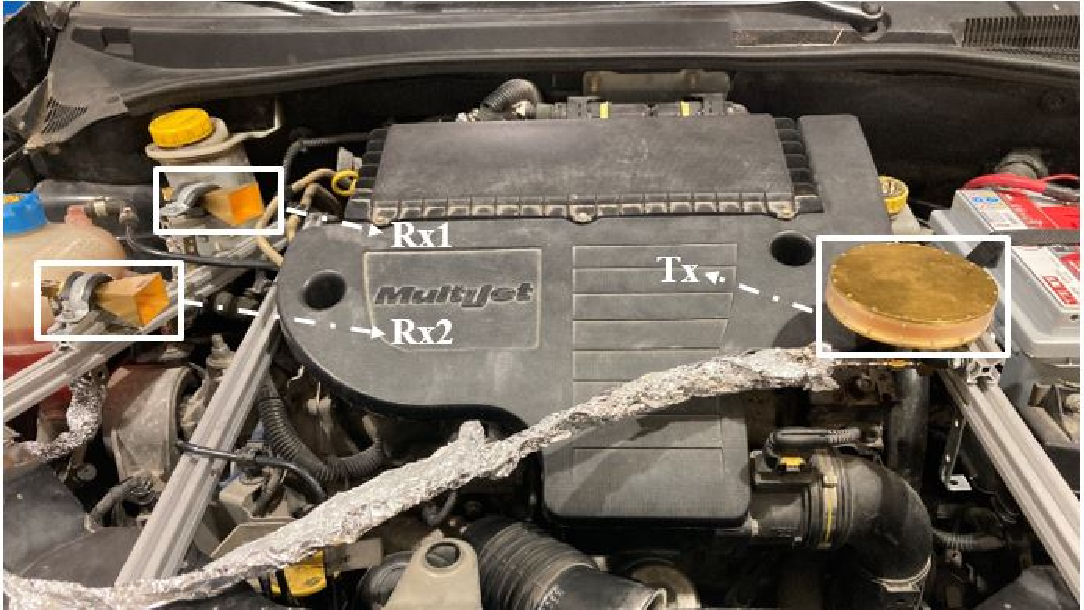}
    \caption{Transmitter and receiver antennas in the engine compartment.}
    \label{fig:fiatlinea}
\end{figure}
\vspace{-0.3cm}
\subsection{UGPD Model}
\label{sec:performance_UGPD}

Towards determining the optimum threshold of the GPD fitted to the tail distribution of samples in each group, the declustering approach is used to remove the dependency among the samples and obtain i.i.d. observation for the EVT input. In this regard, the MRL plot and parameter stability method are utilized to determine the thresholds $u_x$ and $u_y$, and the minimum gaps(mgs) between the samples mg$_1$ and mg$_2$, for receiver Rx$1$ and Rx$2$, respectively. We skip the illustration of these results and only present the critical information as the subject of the uni-variate channel tail modeling has been discussed in our previous paper \cite{MehrniaTWC} in detail.
According to the results, for the uni-variate channel data sequence obtained from receiver Rx$1$, $u_x = -15$~dBm is the optimum threshold below which the $R^{2}>0.95$ for all mg$_1>1$. Additionally, $u_y = -30$~dBm is the optimum threshold below which the linearity condition is observed for all mg$_2>1$ for the uni-variate channel data of receiver Rx$2$.
Upon determining the optimum thresholds ($u_x$ and $u_y$) and minimum gaps mg$_1$ and mg$_2$ between the clusters, the probability plots, including the PP plot and QQ plot, are used to validate the accuracy of the UGPD models fitted to the channel tail distribution of the i.i.d. samples obtained from receivers Rx$1$ and Rx$2$, corresponding to the optimum thresholds $u_x$ and $u_y$, respectively.

\subsection{Determination of Correlation}
\label{sec:performance_corr}
To determine the correlation among the tail samples, first, we have considered the time intervals consisting of $1000$ samples, and then, at each interval, obtained those minima of Rx$1$ and Rx$2$ sequences that simultaneously exceed $u_x=-15$~dBm and $u_y=-30$~dBm, respectively. The parameters of UGPD fitted to these new exceedances are $(\xi_x, \tilde{\sigma}_x) = (-0.1469,4.0367)$, and $(\xi_y,\tilde{\sigma}_y) = (-0.4245,8.5886)$ for receivers Rx$1$ and Rx$2$, respectively.
The correlation among the total samples obtained from Rx$1$ and Rx$2$ is about $0.2766$, which confirms the applicability of spatial diversity as it is less than $0.5$. Additionally, the $0.3957$ correlation coefficient among the tail samples indicates the existence of correlation in the tail samples of Rx$1$ and Rx$2$ and, therefore, confirms the necessity of studying the inter-relationships of the extreme values.

\subsection{BGPD Based on the Logistic Distribution Approach}
\label{sec:performance_BGPD_logistic}
Fig.~\ref{fig:frechet} illustrates the CDF of the transformed data by using the Fr\'echet transformation for the received power tail samples obtained from receivers Rx$1$ and Rx$2$. 
For small values of $\Tilde{x}$ and $\Tilde{y}$, the Fr\'echet transformation is not able to fit appropriately to the empirical data. However, as the transformed variables $\Tilde{x}$ and $\Tilde{y}$ increase and correspond to the extreme values with extreme fading, the Fre\'chet distribution fits well to the empirical values. This confirms that $\tilde{x}$ and $\tilde{y}$ are the reliable transformed
variables for building the BGPD model.

\begin{figure}[ht]
\centering
\captionsetup[subfigure]{labelformat=empty}
     \begin{center}
        \subfloat[(a)]{%
            \label{fig:frechetx}
            \includegraphics[width=0.8\columnwidth]{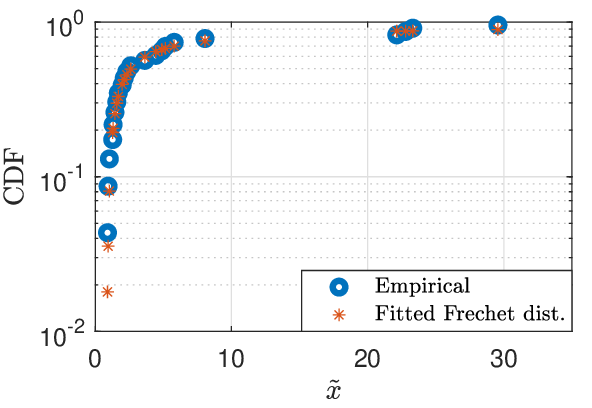}
        }\\%
        \subfloat[(b)]{%
            \label{fig:frechety}
            \includegraphics[width=0.8\columnwidth]{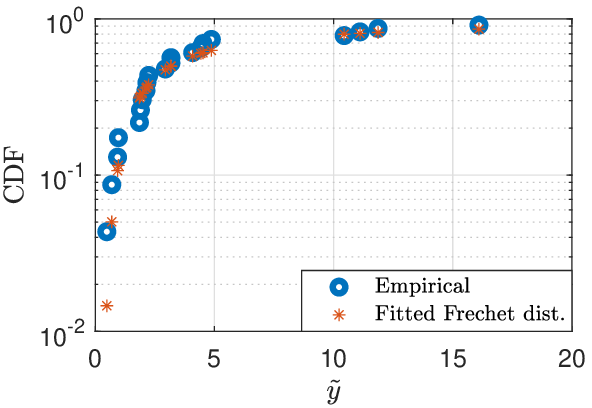}
        }\\
    \end{center}
    \caption{CDF of the received power tail samples transformed based on the Fr\'echet transformation: (a) for the transformed samples of receiver Rx$1$, $x \rightarrow \tilde{x}$, and (b) for the transformed samples of receiver Rx$2$, $y \rightarrow \tilde{y}$.}
   \label{fig:frechet}
\end{figure}

Fig.~\ref{fig:compMGPD} shows the proposed BGPD model determined based on the logistic distribution approach, $G_l(\tilde{x},\tilde{y})$. Empirical joint CDF of the tail samples has also been depicted for better compatibility assessment of the logistic-based BGPD. The BGPD model obtained by the logistic distribution approach can estimate the empirical joint CDF with RMSE $0.8655$. Additionally, the estimation accuracy increases as we approach more extreme values, which confirms the ability of the proposed methodology to model the worse extreme values. However, the exact pattern of the empirical CDF has not been preserved by the logistic-based BGPD model.

\begin{figure}[ht]
    \centering
    \includegraphics[width=0.9\linewidth]{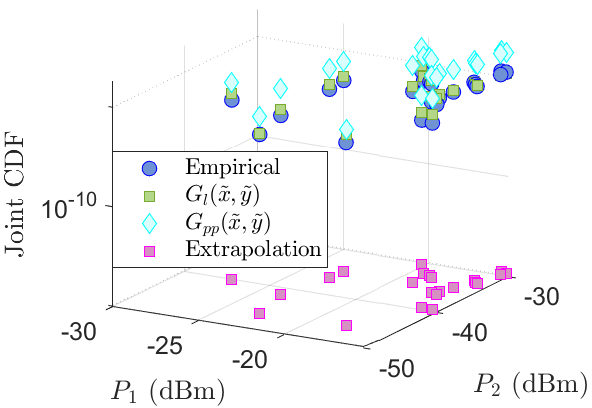}
    \caption{Joint bi-variate CDF of the normalized tail power of empirical, extrapolation-based, logistic-based BGPD, and Poisson point process-based BGPD.}
    \label{fig:compMGPD}
    \end{figure}
\subsection{BGPD Based on the Poisson Point Process Approach}
\label{sec:performance_BGPD_pointprocess}
Fig.~\ref{fig:compMGPD} also depicts the proposed BGPD model determined based on the Poisson point process approach.
It can be seen that the estimated joint CDF based on the Poisson point process approach is in good agreement with the corresponding empirical CDF with an RMSE of $0.8686$. Although the Poisson point process-based approach for modeling the bi-variate tail distribution seems to have the same performance as the logistic distribution-based approach according to the RMSE results, it preserves the shape of the empirical joint CDF very well. This is mainly due to the higher complexity of the additional transformation step as well as integral computation for determining the density of the Poisson point process $\Lambda(A)$.

\subsection{Model Validity Assessment}
\label{sec:performanceValidation}
Figs.~\ref{fig:c0} and \ref{fig:pickandsuniform} are utilized to determine the optimum cut-off point $r_0$ for which the Pickands constraints are satisfied and $H_l(\omega)$ is defined on $r>r_0$. Fig.~\ref{fig:c0} illustrates the $r$-$\omega$ plot of the Poisson point process approach for determining the optimum cut-off point $r_0$. In this figure, the radial component of the Pickands coordinate is plotted versus the corresponding angular component, where $r$ and $\omega$ are the pseudo-polar Pickands coordinates of $\tilde{x}$ and $\tilde{y}$. Please note that $\{(r,\omega)|r<r_0\}$ correspond to independent pairs and $\{(r,\omega)|r>r_0\}$ are matched to dependent pairs of $r$ and $\omega$. According to this plot, $r_0 \approx -0.47$ is the minimum radial component above which the correlation coefficient of Pickands radial and angular components is less than the critical value $0.05$, resulting in independent $r$ and $\omega$ pairs. Therefore, $r_0 \approx -0.47$ is the cut-off radial component that satisfies the first Pickands constraint.

Fig.~\ref{fig:pickandsuniform} shows the distribution of the radial component of the Pickands coordinate for $r>r_0$ where $r$ is the pseudo-polar Pickands radial coordinate of $\tilde{x}$ and $\tilde{y}$, and $r_0 \approx -0.47$ is the optimum radial cut-off point captured from Fig.~\ref{fig:c0}. 
It is observed that the distribution of the radial components $r>r_0$, i.e., those radial components independent from their angular component pairs, fits the uniform distribution, which means that $P(r>R) = \frac{R}{r_0}$, and results in satisfying the second constraint on Pickands coordinates. 

\begin{figure}[ht]
    \centering
    \includegraphics[width=0.8\linewidth]{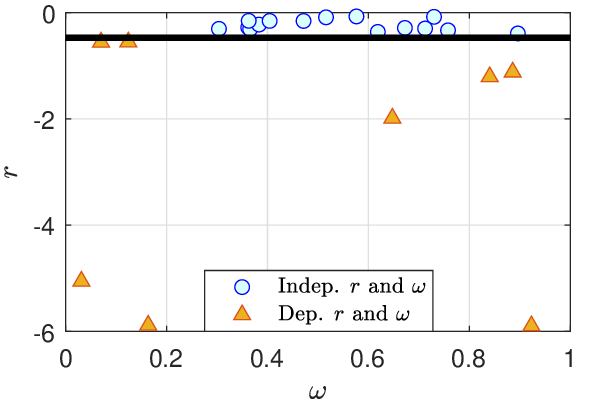}
    \caption{$r$-$\omega$ plot. The Horizontal black line corresponds to $r_0 \approx -0.47$.}
    \label{fig:c0}
\end{figure}

\begin{figure}[ht]
    \centering
    \includegraphics[width=0.8\linewidth]{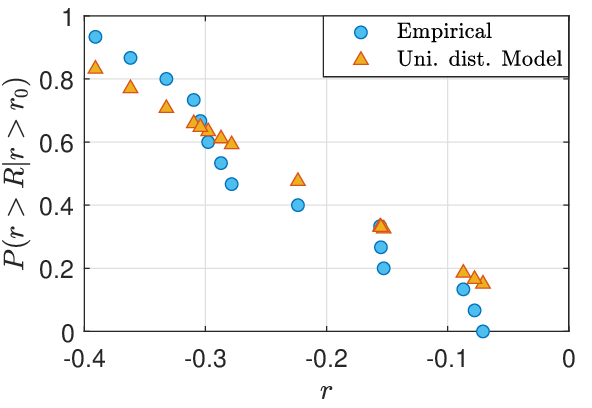}
    \caption{The distribution of the radial component of the Pickands coordinate for $r>r_0$.}
    \label{fig:pickandsuniform}
\end{figure}

We assess the validity of the proposed BGPD models shown in Fig.~\ref{fig:compMGPD} based on the CDF results of Pickands probability density functions $h_{l}(\omega)$ and $h_{pp}(\omega)$ obtained from the logistic-based for $r>r_0$, and Poisson point process-based BGPD approaches, respectively. Accordingly, it is observed that the CDF of the estimated density function of the angular Pickands coordinate $\omega$, for $h_{l}(\omega)$, the $0.5$ mean constraint in (\ref{eqn:theorem_H}) is satisfied as $\int_{\omega} \omega H_l(d\omega) = 0.5001$, indicating the validity of the BGPD model based on the logistic distribution approach. On the other hand, referring to the Pickands density function $h_{pp}(\omega)$ obtained from the Poisson point process-based BGPD approach, the $0.5$ mean constraint is satisfied as $\int_{\omega} \omega H_{pp}(d\omega) = 0.5064$ for the derived $H_{pp}(\omega)$.

\subsection{Comparison with Conventional Tail Estimation Method}
\label{sec:performanceComparison}
Fig.~\ref{fig:compMGPD} additionally illustrates the CDF of the normalized power values for the traditional extrapolation-based approach compared to the empirical and the proposed BGPD models based on the logistic distribution and Poisson point process approaches. 
To derive the traditional extrapolation-based model, we first extract the observations from Rx$1$ and Rx$2$ corresponding to the reliability order of $10^{-3}$-$10^{0}$ PER. Then, upon fitting a variety of distributions to the samples in each observation sequence, the best-fitting distribution is determined as the Gaussian distribution according to the Akaike information criterion/Bayesian information criterion (AIC/BIC) metric. It is worth noting that Gaussian and Rician distribution had almost the same AIC/BIC for both sequences. However, for the sake of simplicity and without loss of generality, we opt to continue with the Gaussian distribution in our analysis as its joint PDF calculation is more straightforward. Then, we determine the joint bi-variate CDF of the Gaussian distribution for variables $X$ and $Y$, corresponding to the received power values obtained from Rx$1$ and Rx$2$, respectively.
Upon determining the joint CDF of $X$ and $Y$, we extrapolate it towards the ultra-reliable region based on the tail approximation approach in \cite{urllc_02}.
According to the results illustrated in Fig.~\ref{fig:compMGPD}, the proposed BGPD models outperform the extrapolated Gaussian model remarkably for a URC system, where the reliability order is in the range of $10^{-9}$-$10^{-5}$. The RMSE of the extrapolated bi-variate Gaussian model is $7.8177 \times 10^{09}$, indicating the superiority of the proposed model with a significant improvement in the estimation of the tail distribution for the ultra-reliability region, where the reliability orders are in the range of $10^{-9}$-$10^{-5}$.
\section{Conclusions}
\label{sec:conclusions}
In this paper, we introduce a novel framework based on the extreme value theory with the goal of estimating the multivariate channel tail statistics for a MIMO-URC. 
The proposed methodology utilizes MEVT to model the tail of the joint probability distribution by using two methods for modeling bi-variate extreme values: the logistic distribution approach, which uses the logistic distribution to determine dependency factors and obtain a model, and the Poisson point process approach, which estimates the probability measure function of the Pickands angular component to model the bi-variate extreme values.
The proposed multi-dimensional channel modeling methodology achieves a significantly better fit to the empirical data in the lower tail than the conventional extrapolation-based approach.
In the future, we are planning to extend this work by determining the estimated BGPD parameters and optimum transmission rate for a delay-constrained real-time communication system leading to the design of a MIMO-URLLC system.
Since the derivation of the statistics requires a large amount of data, such real-time system design requires either the inclusion of confidence intervals in the parameter estimation or the adoption of transfer learning techniques, such as knowledge-assisted training, where the digital twin of a real network is built with network topology, channel and queueing models for offline training, which is then fine-tuned in the real environment with a smaller amount of data.
\vspace{-0.5cm}
\balance 

\ifCLASSOPTIONcaptionsoff
  \newpage
\fi
\bibliographystyle{ieeetr}
\bibliography{MultivariateEVT_ChannledModeling.bib}

\begin{IEEEbiography}[{\includegraphics[width=1in,height=1.25in,clip,keepaspectratio]{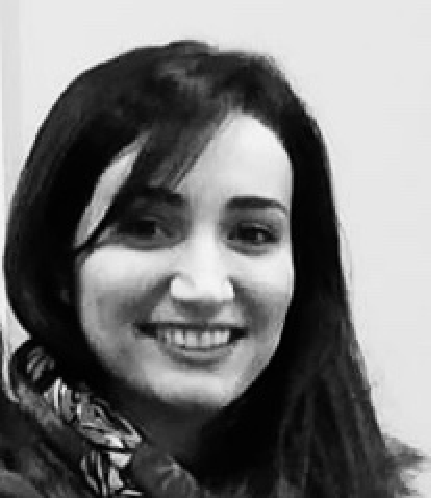}}]%
{Niloofar Mehrnia}
received a B.S. degree in biomedical engineering and M.S. and Ph.D. degree in electrical and electronics engineering from Isfahan University, Istanbul Sehir University, and Koc University in 2014, 2017, and 2022 respectively. From 2019, she also worked as a researcher at Koc University Ford Otosan Automotive Technologies Laboratory (KUFOTAL). Her research interests include wireless channel modeling, ultra-reliable and low-latency communications (URLLC), extreme value theory (EVT), and vehicular communications.
\end{IEEEbiography}

\begin{IEEEbiography}[{\includegraphics[width=1in,height=1.25in,clip,keepaspectratio]{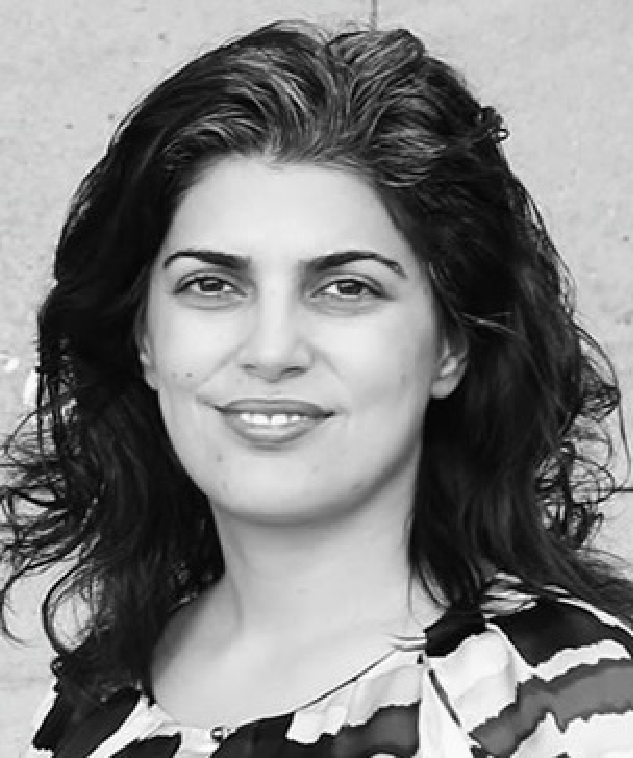}}]{Sinem Coleri}
is a Professor and the Chair of the Department of Electrical and Electronics Engineering at Koc University. She is also the founding director of Wireless Networks Laboratory (WNL) and director of Ford Otosan Automotive Technologies Laboratory. Sinem Coleri received the BS degree in electrical and electronics engineering from Bilkent University in 2000, the M.S. and Ph.D. degrees in electrical engineering and computer sciences from University of California Berkeley in 2002 and 2005. She worked as a research scientist in Wireless Sensor Networks Berkeley Lab under sponsorship of Pirelli and Telecom Italia from 2006 to 2009. Since September 2009, she has been a faculty member in the department of Electrical and Electronics Engineering at Koc University. Her research interests are in 6G wireless communications and networking, machine learning for wireless networks, machine-to-machine communications, wireless networked control systems and vehicular networks. She has received numerous awards and recognitions, including N2Women: Stars in Computer Networking and Communications, TUBITAK (The Scientific and Technological Research Council of Turkey) Incentive Award and IEEE Vehicular Technology Society Neal Shepherd Memorial Best Propagation Paper Award. Dr. Coleri has been Interim Editor-in-Chief of IEEE Open Journal of the Communications Society since 2023, Executive Editor of IEEE Communications Letters since 2023, Editor-at-Large of IEEE Transactions on Communications since 2023, Senior Editor of IEEE Access since 2022 and Editor of IEEE Transactions on Machine Learning in Communications and Networking since 2022. Dr. Coleri is an IEEE Fellow and IEEE ComSoc Distinguished Lecturer.
\end{IEEEbiography}

\end{document}